\newcommand{\ud}{\mathrm{d}}
\newcommand{\cD}{{\mathcal D}}
\newcommand{\cE}{{\mathcal E}}
\newcommand{\cH}{{\mathcal H}}
\newcommand{\cL}{{\mathcal L}}
\newcommand{\cV}{{\mathcal V}}
\newcommand{\vl}{\boldsymbol{l}}
\newcommand{\vy}{\boldsymbol{y}}
\newcommand{\rz}{{\mathbb R}}
\newcommand{\kz}{{\mathbb C}}
\newcommand{\M}{\mathrm{M}}
\newcommand{\eins}{\mathds{1}}
\DeclareMathOperator{\diag}{diag}
\DeclareMathOperator{\op}{op}
\DeclareMathOperator{\ran}{ran}
\numberwithin{equation}{section}
\newtheorem{theorem}{Theorem}[section]
\newtheorem{prop}[theorem]{Proposition}
\theoremstyle{definition}
\newtheorem{defn}[theorem]{Definition}
\begin{document}

\thispagestyle{empty}

\vspace*{1cm}

\begin{center}

{\LARGE\bf Quantum graphs with two-particle\\[5mm]
 contact interactions} \\

\vspace*{3cm}

{\large Jens Bolte}%
\footnote{E-mail address: {\tt jens.bolte@rhul.ac.uk}}
{\large and Joachim Kerner}%
\footnote{E-mail address: {\tt joachim.kerner.2010@live.rhul.ac.uk}}
\vspace*{2cm}

Department of Mathematics\\
Royal Holloway, University of London\\
Egham, TW20 0EX\\
United Kingdom\\

\end{center}

\vfill

\begin{abstract}
We construct models of many-particle quantum graphs with singular two-particle 
contact interactions, which can be either hardcore- or $\delta$-interactions.
Self-adjoint realisations of the two-particle Laplacian including such
interactions are obtained via their associated quadratic forms. We
prove discreteness of spectra as well as Weyl laws for the asymptotic
eigenvalue counts. These constructions are first performed for two 
distinguishable particles and then for two identicle bosons. Furthermore, we 
extend the models to $N$ bosons with two-particle interactions, thus 
implementing the Lieb-Liniger model on a graph.
\end{abstract}

\newpage

\section{Introduction}
This paper is the second in a serious of papers developing models of 
interacting, non-relativisitc many-particle systems on (compact) graphs. 
The underlying one-particle quantum graphs are well established models for a 
variety of problems in quantum mechanics. Interest in such one-particle models 
surged after Kottos and Smilansky \cite{KotSmi99} had discovered that the 
eigenvalues of quantum Hamiltonians describing single particles on graphs 
possess the same correlations as eigenvalues of random hermitian matrices. 
Henceforth, quantum graphs proved to be very successful models in the area of 
quantum chaos \cite{GnuSmi07} and beyond \cite{Exnetal08}.

In our first paper \cite{BolKer11} we introduced systems of two particles on 
compact metric graphs. We identified self-adjoint realisations of the 
two-particle Laplacian that describe singular two-particle interactions
that are located in the vertices of the graph. In that context we analysed the 
Laplacians indirectly, by first constructing closed, semi-bounded quadratic 
forms and then identifying the self-adjoint operators that are uniquely 
associated with the quadratic forms. Using quadratic forms allowed us, 
furthermore, to prove that the Laplacians have compact resolvents and thus 
possess purely discrete spectra. Moreover, with a bracketing argument we were 
able to prove a Weyl law for the asymptotic eigenvalue count.

The goal of this paper now is to introduce two-particle interactions of a 
different kind, modelling short-range interactions in terms of singular contact 
interactions that are formally given by a Hamiltonian 
\begin{equation}
\label{HamiltonianDelta}
 H = -\Bigl(\frac{\partial^{2}}{\partial x^{2}}+
     \frac{\partial^{2}}{\partial y^{2}}\Bigr) + \alpha\,\delta(x-y)\ .
\end{equation}
Here $x$ and $y$ are coordinates for the positions of the two particles on the
edges of the graph and $\alpha\in\rz$ is a coupling parameter, such that 
$\alpha>0$ corresponds to repulsive interactions. Contact interactions of this 
kind play a prominent role in the description of Bose-Einstein condensates 
(BEC), leading to Gross-Pitaevskii equations, in plasma physics and in solid 
state physics (see, e.g., \cite{Lieetal05,Cazetal11}). 
They are also of interest from a 
mathematical point of view: models of (an arbitrary number of) particles on 
the real line with Hamiltonian \eqref{HamiltonianDelta} are completely solvable
in the sense that the eigenfunctions can be constructed explicitly, see 
\cite{Ya67,AbFeKu02,AbFeKu04}. As in the one-particle case the connectivity 
of a graph, however, adds sufficient complexity to the problem to render it 
not solvable in that sense. Therefore, bosonic many-particle systems on graphs 
are expected to show generic many-particle properties of systems confined to 
one spatial dimension.

In this paper we shall follow our previous approach \cite{BolKer11} in that 
we first introduce suitable quadratic forms, show that they are closed and 
semi-bounded, and then identify the corresponding self-adjoint operators. 
These operators shall always be two-particle Lapacians, however, with 
suitable domains that implement singular two-particle contact interactions. 
The resulting operators are then rigorous versions of the formal
Hamiltonian~\eqref{HamiltonianDelta}. In order to achieve this the domains
of the operators are required to contain jump conditions along diagonals $x=y$ 
for derivatives of the functions in the domain. A similar construction was 
already given by Harmer \cite{Har07a,Har08}, who considered two particles on 
a star graph (with semi-infinite edges). Our construction works for any compact
metric graph and allows for straight-forward generalisations to somewhat 
broader classes of singular contact interactions of which the $\delta$-type 
interactions \eqref{HamiltonianDelta} form a prominent subclass.

The paper is organised as follows: In Section~\ref{1sec} we review important
properties of one-particle quantum graphs and introduce relevant concepts 
and notations that we are using in the following sections.  Section~\ref{2sec} 
is then devoted to the construction of the contact interactions via quadratic 
forms for two distinguishable particles on a general compact, metric graph.
In that context we encounter the problem of  elliptic regularity \cite{Dob05} 
in the same way as previously \cite{BolKer11}, suggesting an analogous solution
for a similar class of boundary conditions.  
In Section~\ref{3sec} we first implement a bosonic realisation of a particle 
exchange symmetry and hence obtain a rigorous construction of the
bosonic version of the operator \eqref{HamiltonianDelta}. We then extend 
this construction to $N$ bosons on a general compact graph, for which the 
equivalent to the formal operator \eqref{HamiltonianDelta} is
\begin{equation}
\label{HamiltonianDeltaN}
 H = -\Delta_N + \alpha\sum_{i<j}^N\delta(x_i-x_j)\ .
\end{equation}
Here $x_1,\dots,x_N$ are the particle positions and $-\Delta_N$ is the
Laplacian in these $N$ variables. Hence, the model we are investigating
is an extension to quantum graphs of the well-known Lieb-Liniger model 
\cite{LieLin63} that has been studied extensively in the context of BEC.

\section{Preliminaries}
\label{1sec}
Before introducing many-particle systems on graphs we briefly describe 
one-particle quantum graphs. They form the basis for the tensor-product 
construction of many-particle quantum systems on graphs.

The classical configuration space of a quantum graph is a compact 
metric graph, i.e., a finite graph $\Gamma = (\cV,\cE)$ with vertices 
$\cV = \{v_1,\dots,v_V\}$ and edges $\cE=\{e_1,\dots,e_E\}$. The latter 
are identified with intervals $[0,l_e]$, $e=1,\dots,E$, thus introducing 
a metric on the graph, see \cite{KotSmi99,KosSch99,Kuc04,GnuSmi07} for
details.

Functions $F$ on the graph are collections of functions on the edges,
\begin{equation}
 F=(f_1,\dots,f_E) \ ,\quad\text{with}\quad f_e : [0,l_e]\to\kz\ ,
\end{equation}
so that spaces of functions on $\Gamma$ are (finite) direct sums of
the respective spaces of functions on the edges. The most relevant
space is the one-particle Hilbert space
\begin{equation}
 \cH_1 = L^2 (\Gamma) := \bigoplus_{e=1}^E L^2 (0,l_e) \ ,
\end{equation}
and all other spaces are constructed in a similar way.

The one-particle Hamiltonian is a Laplacian, acting as a second-order 
differentiation,
\begin{equation}
 -\Delta_1 F= (-f_1'',\dots,-f_E'') \ ,
\end{equation}
on $F\in C^\infty (\Gamma)$. We here use the index to indicate that this 
is a one-particle Laplacian. 

Domains of self-adjoint realisations of the Laplacian are characterised in terms
of boundary conditions in the vertices. These require boundary values
\begin{equation}
\label{Fbv}
 F_{bv} := \bigl( f_1(0),\dots,f_E (0),f_1(l_1),\dots,f_E (l_E) \bigr)^T 
              \in\kz^{2E}\ ,
\end{equation}
of functions and (inward) derivatives,
\begin{equation}
 F_{bv}' := \bigl( f_1'(0),\dots,f_E' (0),-f_1'(l_1),\dots,-f_E' (l_E) \bigr)^T 
               \in\kz^{2E}\ ,
\end{equation}
as well as projectors $P$ and $Q=\eins_{2E} -P$ acting on the space $\kz^{2E}$ 
of boundary values, and self-adjoint endomorphisms $L$ of 
$\ran Q\subset \kz^{2E}$.

The self-adjoint realisations of $-\Delta_1$ can be identified via the quadratic
forms that are uniquely associated with them \cite{Kuc04}.
\begin{theorem}[Kuchment]
\label{KuchmentThm}
The quadratic form
\begin{equation}
\label{Qform1}
 Q^{(1)}_{P,L}[F] 
  = \int_\Gamma |\nabla f| \ \ud x -
        \langle F_{bv},LF_{bv}\rangle_{\kz^{2E}}   
  = \sum_{e=1}^E \int_0^{l_e} |f'_e (x)|^2 \ \ud x -
          \langle F_{bv},LF_{bv}\rangle_{\kz^{2E}}\ ,
\end{equation}
with domain
\begin{equation}
\label{Qformdomain}
 \cD_{Q^{(1)}} = \{F\in H^1(\Gamma);\ PF_{bv}=0\}
\end{equation}
is closed and bounded from below. The unique, self-adjoint and semi-bounded
operator associated with this form is the one-particle Laplacian $-\Delta_1$
with domain
\begin{equation}
\label{1partBCalt}
 \cD_1 (P,L) = \{ F\in H^2 (\Gamma);\  PF_{bv}=0\ \text{and}\ 
                  QF_{bv}'+LQF_{bv}=0 \}  \ .
\end{equation}
\end{theorem}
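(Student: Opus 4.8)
The plan is to run the standard programme for identifying a self-adjoint operator through its form: verify that $Q^{(1)}_{P,L}$ is densely defined, semi-bounded and closed, invoke the first representation theorem for semi-bounded closed forms to obtain the unique self-adjoint semi-bounded operator $A$ associated with it, and then identify $A$ with $-\Delta_1$ on the domain $\cD_1(P,L)$ by means of a Green-type identity.

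First I would establish semi-boundedness and closedness from a single analytic input, the one-dimensional trace inequality: for every $\varepsilon>0$ there is a constant $C_\varepsilon>0$ with $|f(0)|^2+|f(l)|^2\le\varepsilon\int_0^l|f'|^2\,\ud x+C_\varepsilon\int_0^l|f|^2\,\ud x$ for all $f\in H^1(0,l)$. Summing over the edges gives $\|F_{bv}\|_{\kz^{2E}}^2\le\varepsilon\,D[F]+C_\varepsilon\|F\|_{\cH_1}^2$ on $H^1(\Gamma)$, where $D[F]:=\sum_{e=1}^E\int_0^{l_e}|f_e'|^2\,\ud x$ is the Dirichlet integral. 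Hence $|\langle F_{bv},LF_{bv}\rangle_{\kz^{2E}}|\le\|L\|\bigl(\varepsilon\,D[F]+C_\varepsilon\|F\|_{\cH_1}^2\bigr)$, and choosing $\varepsilon<\|L\|^{-1}$ both yields $Q^{(1)}_{P,L}[F]\ge-c\|F\|_{\cH_1}^2$ for a suitable $c\ge0$ and shows that the form norm $Q^{(1)}_{P,L}[F]+(c+1)\|F\|_{\cH_1}^2$ is equivalent to the $H^1(\Gamma)$-norm on $\cD_{Q^{(1)}}$. Since $F\mapsto F_{bv}$ is bounded from $H^1(\Gamma)$ to $\kz^{2E}$ and $P$ is bounded, $\cD_{Q^{(1)}}=\{F\in H^1(\Gamma):PF_{bv}=0\}$ is a closed subspace of $H^1(\Gamma)$ and therefore complete in the form norm, so $Q^{(1)}_{P,L}$ is closed; density of $\cD_{Q^{(1)}}$ in $\cH_1$ is immediate since it contains all smooth functions vanishing near the vertices.

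With $A$ the operator furnished by the representation theorem ($\cD(A)\subset\cD_{Q^{(1)}}$ and $Q^{(1)}_{P,L}[F,G]=\langle AF,G\rangle_{\cH_1}$ for $F\in\cD(A)$, $G\in\cD_{Q^{(1)}}$), I would next prove the Green identity
\[
 Q^{(1)}_{P,L}[F,G]=\langle-\Delta_1F,G\rangle_{\cH_1}-\langle QF_{bv}'+\widetilde LF_{bv},\,G_{bv}\rangle_{\kz^{2E}}
\]
for all $F\in H^2(\Gamma)\cap\cD_{Q^{(1)}}$ and $G\in\cD_{Q^{(1)}}$, by integrating by parts on each edge and using $G_{bv}=QG_{bv}$; here $\widetilde L:=QLQ$ is the self-adjoint endomorphism of $\kz^{2E}$ obtained by extending $L$ by zero on $\ran P$, and one checks $\widetilde LF_{bv}=LQF_{bv}$. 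If $F$ in addition satisfies $QF_{bv}'+LQF_{bv}=0$, the boundary term vanishes for every admissible $G$, so $F\in\cD(A)$ with $AF=-\Delta_1F$; thus $-\Delta_1$ with domain $\cD_1(P,L)$ is a restriction of $A$. Conversely, for $F\in\cD(A)$, testing the form identity against $G\in C_c^\infty(0,l_e)$ on each edge shows that $f_e$ solves $-f_e''=(AF)_e\in L^2(0,l_e)$ weakly, whence $F\in H^2(\Gamma)$ and $-\Delta_1F=AF$; inserting this into the Green identity gives $\langle QF_{bv}'+\widetilde LF_{bv},G_{bv}\rangle_{\kz^{2E}}=0$ for all $G\in\cD_{Q^{(1)}}$, and since the boundary values of such $G$ exhaust $\ran Q$ (realise any prescribed endpoint data by a piecewise affine $H^1$ function) while $QF_{bv}'+\widetilde LF_{bv}\in\ran Q$, we conclude $QF_{bv}'+LQF_{bv}=0$, i.e.\ $F\in\cD_1(P,L)$. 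Hence $A=-\Delta_1$ on $\cD_1(P,L)$, and uniqueness together with self-adjointness are built into the representation theorem.

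The routine ingredients are the edgewise integration by parts and the trace estimate; the step that needs genuine care is the identification of $\cD(A)$, which has two parts. One is the regularity statement $F\in H^2(\Gamma)$ — elementary here in one dimension, but it is exactly the analogue of this step that becomes the serious elliptic-regularity difficulty on the two-dimensional configuration space of two particles, which is why the introduction flags it. The other is the linear-algebra bookkeeping that recasts the boundary contribution as $\langle QF_{bv}'+\widetilde LF_{bv},G_{bv}\rangle$ with a manifestly self-adjoint $\widetilde L$, together with the surjectivity of the trace map $\cD_{Q^{(1)}}\to\ran Q$; once these are in place the boundary condition $QF_{bv}'+LQF_{bv}=0$ drops out and the operator domain is pinned down.
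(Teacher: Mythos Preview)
The paper does not actually prove this theorem: it is stated in the preliminaries as a known result attributed to Kuchment \cite{Kuc04} and is not accompanied by a proof. So there is no ``paper's own proof'' to compare against here.

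That said, your argument is correct and is precisely the standard one. It also matches, step for step, the approach the paper itself adopts when proving the two-particle analogues: the trace estimate you use is exactly the one-dimensional version of the bound \eqref{normeq} invoked in the proof of Proposition~\ref{2quadformDelta}, and your identification of the operator domain via the representation theorem, integration by parts, and surjectivity of the trace onto $\ran Q$ parallels the proof of Proposition~\ref{SFGeneral} (which in turn cites Lemma~3.13 of \cite{BolKer11} for the density of boundary values). Your remark that the $H^2$-regularity step is trivial in one dimension but becomes the genuine obstacle in two dimensions is exactly the point the paper makes when it introduces the notion of a \emph{regular} form and proves Theorem~\ref{Regular}.
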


A two-particle quantum system requires the tensor product of two one-particle 
Hilbert spaces,
\begin{equation}
 \cH_2 := \cH_1 \otimes \cH_1 \ .
\end{equation}
For a quantum graph this means that
\begin{equation}
\label{2Hilbert}
 \cH_2 = \Bigl(\bigoplus_{e=1}^E L^2 (0,l_e)\Bigr) \otimes 
 \Bigl(\bigoplus_{e=1}^E L^2 (0,l_e)\Bigr) \ ,
\end{equation}
such that vectors $\Psi\in\cH_2$ are collections $\Psi = (\psi_{e_1 e_2})$ 
of $E^2$ functions defined on the rectangles 
$D_{e_1e_2}=(0,l_{e_1})\times(0,l_{e_2})$. Their disjoint union is
denoted as
\begin{equation}
\label{D_Gamdef}
 D_\Gamma = \dot{\bigcup_{e_1e_2}}D_{e_1e_2} \ ,
\end{equation}
so that one may view $\cH_2$ as
\begin{equation}
 L^2(D_\Gamma) := \bigoplus_{e_1e_2}L^2(D_{e_1e_2})  \ .
\end{equation}
We shall use a similar notation for other function spaces.

As a differential operator, the two-particle Laplacian acts as
\begin{equation}
\label{2Laplace}
 (-\Delta_{2}\Psi)_{e_1e_2} = -\frac{\partial^2\psi_{e_1e_2}}{\partial x_{e_1}^2}-
  \frac{\partial^2\psi_{e_1e_2}}{\partial x_{e_2}^2}\ ,
\end{equation}
and hence has the same form as a Laplacian in $\rz^2$. Defined on 
the domain $C^\infty_0 (D_\Gamma)$, this operator is  symmetric, but not 
self-adjoint.

Self-adjoint realisations of the two-particle Laplacians can either represent
non-inter\-acting particles, or introduce two-particle interactions via
boundary conditions. A particular class of singular two-particle interactions
that are localised in the vertices was established in \cite{BolKer11}. Here
we shall introduce two-particle contact interactions that are localised on the 
edges.

\section{Contact interactions on a general compact graph}
\label{2sec}
The interactions we have in mind are intended to model two point-like particles 
on a graph that interact when they hit each other, i.e., when they are
located in the same position. This requires, in particular,
that they are on the same edge. Hence, the subset of the two-particle
configuration space $D_\Gamma$ \eqref{D_Gamdef} where these
interactions take place consists of the diagonals of the squares $D_{ee}$. 
Singular interactions require a dissection of the configuration space
along these subspaces, and suitable matching conditions for functions
and their derivatives along the boundaries introduced by the dissection. 
We therefore define the `dissected' configuration space
\begin{equation}
\label{Ddecomp}
 D^\ast_\Gamma := \left(\dot{\bigcup_{e_1\neq e_2}}D_{e_1e_2}\right) 
 \dot{\bigcup_{e}}\left(D_{ee}^+\dot\cup D_{ee}^-\right)\ ,
\end{equation}
where $D_{ee}^+=\{(x,y)\in D_{ee};\ x>y\}$ and 
$D_{ee}^-=\{(x,y)\in D_{ee};\ x<y\}$. Functions on $D^\ast_\Gamma$ are denoted
as $\Psi=(\psi_{e_1e_2})$. The components $\psi_{e_1e_2}$ for $e_1\neq e_2$ are 
defined on $D_{e_1e_2}$, whereas $\psi_{ee}=(\psi_{ee}^+,\psi_{ee}^-)$ with
$\psi_{ee}^\pm$ defined on $D^\pm_{ee}$.

The two-particle Hilbert space $\cH_2$ \eqref{2Hilbert} can then also be 
viewed as
\begin{equation}
 L^2(D^\ast_\Gamma) = \left(\bigoplus_{e_1\neq e_2}L^2(D_{e_1e_2}) 
 \right)\bigoplus_{e}\bigl(L^2(D_{ee}^+)\oplus L^2(D_{ee}^-)\bigr)\ .
\end{equation}
Boundary values of functions $\Psi\in H^1(D^\ast_\Gamma)$ are encoded 
in vectors 
\begin{equation}
\label{bvdef}
 \Psi_{bv}(y) = \bigl(\psi_{e_1e_2,bv}(y)\bigr) \qquad\text{and}\qquad
  \Psi'_{bv}(y) = \bigl(\psi'_{e_1e_2,bv}(y)\bigr)\ .
\end{equation}
We distinguish components with $e_1\neq e_2$ from those with $e_1=e_2$,
as in the latter case additional boundary values along diagonals have to be
taken into account. More specifically, when $e_1\neq e_2$ we define
\begin{equation}
\label{bvnondiag}
 \psi_{e_1e_2,bv}(y) := 
 \begin{pmatrix}\sqrt{l_{e_2}}\psi_{e_1 e_2}(0,l_{e_2}y) \\ 
 \sqrt{l_{e_2}}\psi_{e_1 e_2}(l_{e_1},l_{e_2}y)\\
 \sqrt{l_{e_1}}\psi_{e_1 e_2}(l_{e_1}y,0) \\ 
 \sqrt{l_{e_1}}\psi_{e_1 e_2}(l_{e_1}y,l_{e_2})
 \end{pmatrix} 
 \qquad\text{and}\qquad 
 \psi'_{e_1e_2,bv}(y) := \begin{pmatrix}\sqrt{l_{e_2}}\psi_{e_1 e_2,x}(0,l_{e_2}y) \\ 
 -\sqrt{l_{e_2}}\psi_{e_1 e_2,x}(l_{e_1},l_{e_2}y) \\
 \sqrt{l_{e_1}}\psi_{e_1 e_2,y}(l_{e_1}y,0) \\ 
 -\sqrt{l_{e_1}}\psi_{e_1 e_2,y}(l_{e_1}y,l_{e_2})
 \end{pmatrix} \ ,
\end{equation}
where $y\in [0,1]$. 
When $e_1=e_2$ boundary values along the diagonals of the squares
$D_{ee}$ have to be added, including those for derivatives. Noting that the 
inward normal derivatives along the `diagonal' part of the boundary of 
$D^\pm_{ee}$ are
\begin{equation}
\label{nderdiag}
 \psi^\pm_{ee,n} = 
 \frac{\pm 1}{\sqrt{2}}\bigl(\psi^\pm_{ee,x}-\psi^\pm_{ee,y}\bigr) \ ,
\end{equation}
we set
\begin{equation}
\label{bvdiag}
 \psi_{ee,bv}(y) := \begin{pmatrix}  \sqrt{l_e}\psi^{-}_{ee}(0,l_e y) \\ 
 \sqrt{l_e}\psi^{+}_{ee}(l_e,l_e y) \\ \sqrt{l_e}\psi^{+}_{ee}(l_e y,0) \\ 
 \sqrt{l_e}\psi^{-}_{ee}(l_e y,l_e) \\ \sqrt{l_e} \psi^{+}_{ee}(l_e y,l_e y) \\ 
 \sqrt{l_e}\psi^{-}_{ee}(l_e y,l_e y) \end{pmatrix}
 \qquad\text{and}\qquad
 \psi'_{ee,bv}(y) := \begin{pmatrix} \sqrt{l_{e}}\psi^{-}_{ee,x}(0,l_{e}y) \\ 
 -\sqrt{l_e}\psi^{+}_{ee,x}(l_e,l_e y) \\ \sqrt{l_e}\psi^{+}_{ee,y}(l_e y,0) \\ 
 -\sqrt{l_e}\psi^{-}_{ee,y}(l_e y,l_e) \\ \sqrt{2l_e}\psi^{+}_{ee,n}(l_e y,l_e y)\\ 
 \sqrt{2l_e}\psi^{-}_{ee,n}(l_e y,l_e y) \end{pmatrix}\ ,
\end{equation}
for $y \in [0,1]$. Altogether, the vectors \eqref{bvdef} of boundary values 
have $n(E):=4E^2 +2E$ components.

As a next step we introduce the bounded and measurable maps 
$P,L: [0,1] \to \M(n(E),\kz)$ that are required to fulfil
\begin{enumerate}
\item $P(y)$ is an orthogonal projector,
\item $L(y)$ is a self-adjoint endomorphism on $\ker P(y)$,
\end{enumerate}
for a.e. $y \in [0,1]$. We then introduce two bounded and self-adjoint 
operators, $\Pi$ and $\Lambda$, on $L^2(0,1)\otimes\kz^{n(E)}$. They are 
defined to act as $(\Pi\chi)(y):=P(y)\chi(y)$ and  
$(\Lambda\chi)(y):=L(y)\chi(y)$ on $\chi\in L^2(0,1)\otimes\kz^{n(E)}$.

Our aim is to obtain self-adjoint realisations of the two-particle Laplacian 
$-\Delta_2$, see \eqref{2Laplace}, that represent two-particle contact 
interactions and are extensions of $-\Delta_{2,0}$ defined on the domain
$C^\infty_0(D^\ast_\Gamma)$. In analogy to the one-particle case 
\eqref{1partBCalt}, as well as the case of singular interactions covered in 
\cite{BolKer11}, their domains should be given in the form
\begin{equation}
\label{DomainGen}
\begin{split}
 \cD_2 (P,L) := \{
       &\Psi\in H^2(D^\ast_\Gamma);\ P(y)\Psi_{bv}(y)=0\ \text{and}\\
       &\quad Q(y)\Psi'_{bv}(y)+L(y)Q(y)\Psi_{bv}(y)=0\ \text{for a.e.}\
          y\in [0,l] \} \ .
\end{split}
\end{equation}
In order to clearly distinguish the boundary conditions that induce contact
interactions from other kinds of boundary conditions we rearrange the
order of terms in the boundary vectors. We first list, for each edge $e$, the 
lower two boundary values in \eqref{bvdiag}, and then, for each pair 
$(e_1,e_2)$, either the four components in \eqref{bvnondiag} or the upper four 
components of \eqref{bvdiag}. That way one achieves a decomposition of the
space of boundary values according to
\begin{equation}
\label{bvblocks}
 \kz^{n(E)} = V_{contact}\oplus V_{vertex}\ .
\end{equation}
Here $V_{contact}$, with $\dim V_{contact}=2E$, contains the boundary values
\eqref{bvdiag} along diagonals, whereas $V_{vertex}$, with $\dim V_{vertex}=4E^2$,
contains the remaining boundary values \eqref{bvnondiag} and \eqref{bvdiag}, 
which are associated with vertices. A separation of contact interactions from 
any other boundary effects requires to choose $P$ and $L$ as block-diagonal 
with respect to the decomposition \eqref{bvblocks}. From now on we assume this 
to be the case.

For the restriction of $P$ and $L$ to $V_{vertex}$ we assume the same conditions
as in \cite{BolKer11}. For most purposes, however, it is sufficient to suppose 
that there are no two-particle interactions in the vertices. In \cite{BolKer11}
the non-interacting boundary conditions were characterised as follows: The 
restrictions of $P$ and $L$ to $V_{vertex}$ are independent of $y$ and 
block-diagonal with respect to a decomposition of $V_{vertex}$ according to the 
index $e_2$ in \eqref{bvdef}.

The restriction of $P$ and $L$ to $V_{contact}$ should, first of all, be 
block-diagonal with respect to a decomposition of that space according to the 
edges in order to avoid `contact' interactions across edges. Further 
restriction are not necessary, but we identify the following two cases as
of particular interest because they correspond to a Hamiltonian of the form 
\eqref{HamiltonianDelta}. 
\begin{defn}
Let $\alpha:[0,1]\to\rz$ be Lipschitz continuous. Then a contact interaction 
is said to be of 
\begin{itemize}
\item[(i)] {\it $\delta$-type with (variable) strength $\alpha$}, if 
$\Psi\in H^2(D^\ast_\Gamma)$ is continuous across diagonals,
\begin{equation}
\label{fctcont}
 \psi^{+}_{ee}(l_e y,l_e y) =  \psi^{-}_{ee}(l_e y,l_e y) \ ,
\end{equation}
and satisfies jump conditions for the normal derivatives,
\begin{equation}
\label{derjump}
 \psi^+_{ee,n}(l_e y,l_e y) + \psi^-_{ee,n}(l_e y,l_e y) = 
 \frac{1}{\sqrt{2}}\alpha(y)\psi^\pm_{ee}(l_e y,l_e y) \ ,
\end{equation}
\item[(ii)] {\it hardcore type}, if it satisfies Dirichlet boundary
conditions along diagonals.
\end{itemize}
\end{defn}
We remark that contact interactions of the $\delta$-type can be seen as a
rigorous realisation of a Hamiltonian
\begin{equation}
 -\Delta_2 + \alpha(y)\,\delta(x-y)\ .
\end{equation}
The case $\alpha(y)>0$ for all $y\in[0,1]$ corresponds to repulsive 
interactions and is the most relevant case for models of actual particles on 
a graph. Hardcore interactions follow from such a formal Hamiltonian in the 
limit $\alpha\to\infty$.

Following our intention to represent domains of two-particle Laplacians in
the form \eqref{DomainGen} we have to choose the maps $P$ and $L$ in such a
way that their restrictions to the edge-$e$ subspace of $V_{contact}$ are
\begin{equation}
\label{BCdelta1}
 P_{contact,e}(y) = \frac{1}{2}\begin{pmatrix} 1 & -1 \\ -1 & 1\end{pmatrix} \,
\end{equation}
and
\begin{equation}
\label{BCdelta2}
 L_{contact,e}(y) = -\frac{1}{2}\,\alpha(y)\,\eins_2 \ ,
\end{equation}
in order to generate $\delta$-type contact interactions. Hardcore interactions
require the choice $P_{contact}=\eins$ and $L_{contact}=0$.

Our approach to self-adjoint realisations of the Laplacian uses suitable 
quadratic forms, which are uniquely associated with these operators. 
\begin{prop}
\label{2quadformDelta}
Assume that the maps $P,L:[0,1]\to\M(n(E),\kz)$ are bounded and measurable. 
Then the quadratic form
\begin{equation}
\label{QformDeltaGeneral}
 Q^{(2)}_{P,L}[\psi] = \langle\nabla\Psi,\nabla\Psi\rangle_{L^2(D^\ast_\Gamma)}
 -\int_0^1 \langle \Psi_{bv}(y),L(y)\Psi_{bv}(y)\rangle_{\kz^{n(E)}} \ud y \ ,
\end{equation}
with domain
\begin{equation}
\label{DefquadDelta}
 \cD_{Q^{(2)}} = \{\Psi\in H^1(D^{*}_{\Gamma});\ P(y)\Psi_{bv}(y)=0\ \text{for a.e.}\
 y \in [0,1]\}
\end{equation}
is closed and semi-bounded.
\end{prop}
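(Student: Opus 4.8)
The plan is to follow the standard recipe for proving that a quadratic form built from a Dirichlet-type energy plus a lower-order boundary perturbation is closed and semi-bounded, exactly as in the one-particle case (Theorem~\ref{KuchmentThm}) and in \cite{BolKer11}. First I would deal with semi-boundedness. The term $\langle\nabla\Psi,\nabla\Psi\rangle_{L^2(D^\ast_\Gamma)}$ is non-negative, so everything hinges on controlling the boundary term $\int_0^1\langle\Psi_{bv}(y),L(y)\Psi_{bv}(y)\rangle\,\ud y$ by the gradient term plus a constant multiple of $\|\Psi\|^2_{L^2(D^\ast_\Gamma)}$. Since $L$ is bounded and measurable, $|\langle\Psi_{bv}(y),L(y)\Psi_{bv}(y)\rangle|\le\|L\|_\infty\,|\Psi_{bv}(y)|^2$, so it suffices to estimate $\int_0^1|\Psi_{bv}(y)|^2\,\ud y$, i.e.\ the $L^2$-norm over $[0,1]$ of the traces of $\Psi$ on the various pieces of $\partial D^\ast_\Gamma$ (the edge-sides of the rectangles and the diagonals of the squares $D_{ee}$). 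This is a trace estimate: for each rectangle or triangle, the trace operator $H^1\to L^2(\partial)$ is bounded, and moreover a standard interpolation/scaling argument (the usual ``$\varepsilon$-trick'') gives, for every $\varepsilon>0$, a constant $C_\varepsilon$ with $\|u\|^2_{L^2(\partial\Omega)}\le\varepsilon\|\nabla u\|^2_{L^2(\Omega)}+C_\varepsilon\|u\|^2_{L^2(\Omega)}$ on each of the finitely many two-dimensional domains making up $D^\ast_\Gamma$. Because the parametrisation in \eqref{bvnondiag} and \eqref{bvdiag} uses the rescaled variable $l_e y$ together with the factors $\sqrt{l_e}$, the push-forward of arclength measure on the boundary pieces to $[0,1]$ is (a multiple of) Lebesgue measure, so $\int_0^1|\Psi_{bv}(y)|^2\,\ud y$ is comparable to $\sum\|\text{trace}\|^2_{L^2(\partial)}$; choosing $\varepsilon$ small enough (say $\varepsilon\|L\|_\infty<\tfrac12$) and summing over the finitely many components yields $|\text{boundary term}|\le\tfrac12\|\nabla\Psi\|^2+C\|\Psi\|^2$, hence $Q^{(2)}_{P,L}[\Psi]\ge-C\|\Psi\|^2$, which is semi-boundedness.

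Next I would prove closedness. The form norm is $\|\Psi\|^2_{Q}:=Q^{(2)}_{P,L}[\Psi]+(C+1)\|\Psi\|^2_{L^2(D^\ast_\Gamma)}$, and the same trace estimate above shows this is equivalent to the ordinary $H^1(D^\ast_\Gamma)$-norm on the form domain: the upper bound $\|\Psi\|_Q\lesssim\|\Psi\|_{H^1}$ is immediate from boundedness of $L$ and of the trace operators, and the lower bound $\|\Psi\|_{H^1}\lesssim\|\Psi\|_Q$ follows by absorbing the (negative, trace-controlled) boundary term. So it is enough to show that $\cD_{Q^{(2)}}$ is a closed subspace of $H^1(D^\ast_\Gamma)$. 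Let $\Psi_n\in\cD_{Q^{(2)}}$ with $\Psi_n\to\Psi$ in $H^1(D^\ast_\Gamma)$. By continuity of the trace maps, the boundary vectors converge, $\Psi_{n,bv}\to\Psi_{bv}$ in $L^2(0,1)\otimes\kz^{n(E)}$; equivalently $\Psi_{n,bv}\to\Psi_{bv}$ under the bounded operator of ``take boundary values''. The constraint $P(y)\Psi_{n,bv}(y)=0$ for a.e.\ $y$ says exactly $\Pi(\Psi_{n,bv})=0$, where $\Pi$ is the bounded orthogonal projection on $L^2(0,1)\otimes\kz^{n(E)}$ introduced before Proposition~\ref{2quadformDelta}; since $\ker\Pi$ is closed and $\Psi_{n,bv}\to\Psi_{bv}$ in that space, $\Pi\Psi_{bv}=0$, i.e.\ $P(y)\Psi_{bv}(y)=0$ for a.e.\ $y$. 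Hence $\Psi\in\cD_{Q^{(2)}}$, and the form is closed.

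The two ingredients I expect to need most care are: (a) verifying that the trace operators from $H^1$ of the pieces of $D^\ast_\Gamma$ into the $L^2(0,1)$-valued boundary vectors are well defined and bounded, including the diagonal traces $\psi^\pm_{ee}(l_ey,l_ey)$ on the triangles $D^\pm_{ee}$ — this is where the dissection in \eqref{Ddecomp} matters, since the traces from the two sides of a diagonal are taken separately — and (b) the uniform (in $\varepsilon$) constants in the quantitative trace inequality, which I would obtain by the usual density/scaling argument on a fixed reference triangle and square and then transporting to each $D_{e_1e_2}$, $D^\pm_{ee}$ via the affine maps implicit in \eqref{bvnondiag}–\eqref{bvdiag}; the factors $\sqrt{l_e}$ are precisely chosen so that these transported inequalities have edge-length-independent form. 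Everything else is routine: finiteness of $E$ means all sums are finite, and the measurability of $P,L$ together with their boundedness is all that is used — no projector or self-adjointness hypothesis on $P(y),L(y)$ is needed for this proposition (those enter only later, for self-adjointness of the associated operator and for the identification of its domain \eqref{DomainGen}). Note, in particular, that the proof does not use ellipticity or $H^2$-regularity; the elliptic-regularity issue flagged in the introduction is relevant only to the subsequent identification of the operator domain, not to closedness and semi-boundedness of the form.
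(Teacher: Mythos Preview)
Your proposal is correct and follows essentially the same route as the paper: bound the boundary term by $\|L\|_\infty\|\Psi_{bv}\|^2$, then control $\|\Psi_{bv}\|^2$ by a quantitative trace inequality of the form $\|\Psi_{bv}\|^2\le K\bigl(\tfrac{2}{\delta}\|\Psi\|^2_{L^2}+\delta\|\nabla\Psi\|^2_{L^2}\bigr)$, absorb the gradient piece for $\delta$ small, and deduce both semi-boundedness and equivalence of the form norm with the $H^1$-norm; closedness then reduces to closedness of $\ker\Pi$ under the bounded trace map, exactly as you argue.

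The one point where the paper differs slightly from your plan is in how the trace estimate is obtained on the triangles $D^\pm_{ee}$. You propose to work directly on a fixed reference triangle and transport by affine maps. The paper instead notes that the specific one-dimensional technique from \cite{Kuc04} (integrating along segments perpendicular to the boundary) breaks down near the $\pi/4$ corners of $D^\pm_{ee}$, and circumvents this by \emph{reflecting} $\psi^\pm_{ee}$ across an edge to extend it to a square, where the rectangle argument from \cite{BolKer11,Kuc04} applies verbatim. Both approaches are valid; yours is marginally more self-contained, while the paper's reflection trick lets one recycle the rectangle estimate without proving anything new on triangles.
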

\begin{proof}
The proof follows by using the same steps as in the corresponding proof in 
\cite{BolKer11}. The only consideration that has to be added concerns the 
upper bound
\begin{equation}
 \left| \int_0^1 \langle\Psi_{bv}(y),L(y)\Psi_{bv}(y) \rangle_{\kz^{n(E)}}\
 \ud y \right| \leq  L_{max}\, \|\Psi_{bv}\|^{2}_{L^{2}(0,1)\otimes\kz^{n(E)}} \ .
\end{equation}
To estimate the right-hand side one requires the bound,
\begin{equation}
\label{normeq}
 \|\Psi_{bv}\|^{2}_{L^2(0,1)\otimes\kz^{n(E)}} \leq K\,\left( \frac{2}{\delta}\,
 \|\Psi\|^{2}_{L^{2}(D^{*}_{\Gamma})} + \delta\, \|\nabla\Psi\|^{2}_{L^{2}(D^{*}_{\Gamma})}
 \right) \ ,
\end{equation}
to hold for all $\delta\leq\delta_0$, where $K,\delta_0>0$. The 
contribution from the rectangles $D_{e_1e_2}$ (with $e_1\neq e_2$) in the 
decomposition \eqref{Ddecomp} can be dealt with as in \cite{BolKer11} and
is based on a result in \cite{Kuc04}. For the triangles $D_{ee}^\pm$ we note 
that close to the corners with angles $\pi/4$ this method fails. However,
one can always reflect functions $\psi^\pm_{ee}$ across edges, define
them on suitable squares and then apply the bound as before for the 
rectangles. The proof then continues as in \cite{BolKer11}.  
\end{proof}
According to the representation theorem for quadratic forms (see, e.g.,
\cite{Kat66}) there exists
a unique self-adjoint and semibounded operator $H$ with domain 
$\cD(H)\subseteq\cD_{Q^{(2)}}$ that is associated with the quadratic form
$Q^{(2)}_{P,L}$. It is not immediately clear, however, that the functions
in $\cD(H)$ possess $H^2$-regularity. If this is the case we say, for short, 
that the quadratic form is {\it regular}. We note that a self-adjoint
realisations of $-\Delta_2$ with domain \eqref{DomainGen} would correspond to 
a regular form. 

Under an additional (mild) assumption a regular quadratic form indeed leads
to a two-particle Laplacian with domain \eqref{DomainGen}.
\begin{prop}
\label{SFGeneral}
Suppose that the map $P$ is of class $C^{1}$ and that the quadratic form 
$Q^{(2)}_{P,L}$ with domain $\cD_{Q^{(2)}}$ is regular. Then the unique, 
self-adjoint and semibounded operator that is associated with this form
is the two-particle Laplacian $-\Delta_2$ with domain $\cD_{2}(P,L)$.
\end{prop}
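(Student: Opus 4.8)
The plan is to identify the self-adjoint operator $H$ associated with the quadratic form by a standard two-step argument: first show that $-\Delta_2$ with domain $\cD_2(P,L)$ is symmetric and that its form agrees with $Q^{(2)}_{P,L}$ on $\cD_2(P,L)$, then show that $\cD_2(P,L)$ actually exhausts the operator domain $\cD(H)$. Concretely, I would start from $\Psi\in\cD(H)$ and an arbitrary test function $\Phi\in\cD_{Q^{(2)}}$, and compute $Q^{(2)}_{P,L}[\Psi,\Phi]$ (the sesquilinear form) via Green's identity on each piece of $D^\ast_\Gamma$ — the rectangles $D_{e_1e_2}$ with $e_1\neq e_2$ and the triangles $D_{ee}^\pm$. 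By the regularity hypothesis $\Psi\in H^2(D^\ast_\Gamma)$, so integration by parts is legitimate and produces the bulk term $\langle -\Delta_2\Psi,\Phi\rangle_{L^2(D^\ast_\Gamma)}$ together with boundary integrals over $\partial D^\ast_\Gamma$. Collecting these boundary contributions and writing them through the boundary-value vectors \eqref{bvdef}, one obtains an expression of the form $\int_0^1\langle \Psi'_{bv}(y)+L(y)\Psi_{bv}(y),\Phi_{bv}(y)\rangle_{\kz^{n(E)}}\,\ud y$ plus the $-\int_0^1\langle\Psi_{bv},L\Psi_{bv}\rangle$ term already present in \eqref{QformDeltaGeneral}; the latter cancels the $L$-part coming from the form, leaving only the derivative boundary term to be controlled.

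Since $H$ is the operator associated with the form, $Q^{(2)}_{P,L}[\Psi,\Phi]=\langle H\Psi,\Phi\rangle$ for all $\Phi\in\cD_{Q^{(2)}}$, and comparing with the Green's-identity computation forces $H\Psi=-\Delta_2\Psi$ in $L^2$ and, crucially, forces the boundary term to vanish for every admissible $\Phi$. Now $\Phi$ ranges over all of $\cD_{Q^{(2)}}$, i.e. over all $H^1$-functions with $P(y)\Phi_{bv}(y)=0$; hence $\Phi_{bv}(y)$ ranges (for a.e.\ $y$) over $\ker P(y)=\ran Q(y)$, with the appropriate surjectivity of the boundary-trace map onto this subspace (this surjectivity is the analogue of the one-particle fact used in Theorem~\ref{KuchmentThm}). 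Therefore the vanishing of the boundary term is equivalent to $Q(y)\bigl(\Psi'_{bv}(y)+L(y)\Psi_{bv}(y)\bigr)=0$ for a.e.\ $y$, which together with $P(y)\Psi_{bv}(y)=0$ (inherited from $\Psi\in\cD_{Q^{(2)}}\supseteq\cD(H)$) is precisely the definition of $\cD_2(P,L)$. This gives $\cD(H)\subseteq\cD_2(P,L)$. For the reverse inclusion one checks directly, again by Green's identity, that any $\Psi\in\cD_2(P,L)$ satisfies $Q^{(2)}_{P,L}[\Psi,\Phi]=\langle-\Delta_2\Psi,\Phi\rangle$ for all $\Phi\in\cD_{Q^{(2)}}$, so $\Psi\in\cD(H)$ and $H\Psi=-\Delta_2\Psi$; by the representation theorem this operator is unique, hence $H=-\Delta_2|_{\cD_2(P,L)}$.

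The role of the hypothesis $P\in C^1$ is to make the splitting $Q=\eins-P$ and the rearranged boundary conditions behave well under the $y$-integration: one needs $Q(y)$ to vary smoothly enough that $Q(y)\Psi_{bv}(y)\in H^1(0,1)$-type regularity is preserved and that the pointwise-in-$y$ boundary identities can be read off from the integral identity. In particular the product rule applied to $Q(y)\Psi_{bv}(y)$ along the diagonal traces should not generate uncontrolled terms; $C^1$-dependence (together with the Lipschitz assumption on $\alpha$ in the $\delta$-case) guarantees this. I would also note here that for the specific choices \eqref{BCdelta1}--\eqref{BCdelta2} the maps are in fact $y$-independent except through $\alpha$, so this technical point is automatic in the cases of main interest.

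The main obstacle I anticipate is the boundary-term bookkeeping on the triangles $D_{ee}^\pm$: the diagonal $x=y$ enters the boundary of each triangle with the non-axis-aligned normal \eqref{nderdiag}, and one must verify that Green's identity on $D_{ee}^+$ and $D_{ee}^-$ produces exactly the diagonal entries of $\psi_{ee,bv}$ and $\psi'_{ee,bv}$ in \eqref{bvdiag} — including the factors $\tfrac{1}{\sqrt2}$ and $\sqrt{2l_e}$ — with signs that combine correctly across the two triangles. A secondary subtlety is the corner behaviour at the $\pi/4$ corners of the triangles, where $H^2$-functions need not have bounded traces a priori; but since regularity $\Psi\in H^2(D^\ast_\Gamma)$ is \emph{assumed}, the trace theorem on these Lipschitz (indeed polygonal) domains suffices, and the reflection trick already used in the proof of Proposition~\ref{2quadformDelta} can be invoked again if a cleaner domain is wanted. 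Everything else — the Green's identity on rectangles, the cancellation of the $L$-term, the surjectivity of the boundary map onto $\ran Q(y)$ — is routine and parallels \cite{BolKer11} and Theorem~\ref{KuchmentThm}.
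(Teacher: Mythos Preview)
Your proposal is essentially the paper's own argument: both invoke the representation theorem, identify the action of $H$ as $-\Delta_2$ via integration by parts against $\Phi\in C^\infty_0(D^\ast_\Gamma)$, and then deduce the boundary conditions from the vanishing of the boundary term $\int_0^1\langle\Psi'_{bv}+L\Psi_{bv},\Phi_{bv}\rangle\,\ud y$ as $\Phi$ ranges over $\cD_{Q^{(2)}}$, using that $\{\Phi_{bv}:\Phi\in\cD_{Q^{(2)}}\}$ is dense in $\ker\Pi\subset L^2(0,1)\otimes\kz^{n(E)}$ (the paper cites Lemma~3.13 of \cite{BolKer11} for this, which is where the $C^1$ hypothesis on $P$ actually enters). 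The only minor caution is your phrasing ``$\Phi_{bv}(y)$ ranges over $\ker P(y)$ for a.e.\ $y$'': what one has is density in the $L^2$-sense, not pointwise surjectivity, and it is this $L^2$-density that yields $\Psi'_{bv}+\Lambda\Psi_{bv}\in(\ker\Pi)^\perp$ and hence the a.e.\ condition $Q(y)\Psi'_{bv}(y)+L(y)Q(y)\Psi_{bv}(y)=0$.
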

\begin{proof}
The proof can essentially be taken over verbatim from the corresponding proof 
in \cite{BolKer11}. It is based on the representation theorem for quadratic 
forms, which implies that for each $\Psi\in\cD(H)$ there exists a unique 
$\chi\in L^{2}(D^{*}_{\Gamma})$ such that
\begin{equation}
\label{Rep}
 Q^{(2)}_{P,L}[\Psi,\Phi]= \langle\chi,\Phi\rangle\ , \quad  
 \forall\phi\in\cD_{Q^{(2)}} \ .
\end{equation}
When $\Phi \in C^{\infty}_{0}(D^{*}_{\Gamma})$, an integration by parts of 
\eqref{Rep} implies that $H$ acts as a two-particle Laplacian $-\Delta_{2}$. 
In the general case of a $\Psi\in\cD_{Q^{(2)}}$ the integration by parts yields
an additional boundary term, 
\begin{equation}
 -\int_0^1\langle\Psi'_{bv}(y)+L(y)\Psi_{bv}(y),\Phi_{bv}(y)\rangle_{\kz^{n(E)}}
 \ \ud y\ ,
\end{equation}
that is required to vanish. Following Lemma~3.13 in \cite{BolKer11}, which
has an immediate generalisation to the present case, the set
$\{\Phi_{bv};\ \Phi\in\cD_{Q^{(2)}}\}$ is dense in 
$\ker \Pi \subset L^2(0,1)^{n(E)}$. Hence, 
$\Psi'_{bv}+\Lambda\Psi_{bv}\in\ker\Pi^{\perp}$, or
\begin{equation}
 Q(y)\Psi'_{bv}(y)+Q(y)L(y)\Psi_{bv}(y)=0\ .
\end{equation}
This condition finally implies that $\cD(H)=\cD_2(P,L)$.
\end{proof}
As mentioned above, the quadratic forms in Proposition~\ref{2quadformDelta}
are not necessarily regular. Since our focus is on contact interactions of
$\delta$- or hardcore-type, it is sufficient to consider these cases. 
However, as in \cite{BolKer11} we have to add one additional assumption on
the restrictions $P_{vert}$ of the projectors $P$ to $V_{vertex}$. Splitting
$V_{vertex}$ into the two subspaces spanned by the upper two and the lower
two components of \eqref{bvnondiag} as well as the upper two and middle two
components of \eqref{bvdiag}, respectively, we require $P_{vert}$
to be block-diagonal with respect to this decomposition.

This then leads to the main result of this section.
\begin{theorem}
\label{Regular}
In addition to the assumption made for the maps $P$ and $L$ above, suppose 
that $P$ is of class $C^{3}$ and $L$ is Lipschitz continuous. 
Furthermore, for $y\in[0,\epsilon_{1}]\cup[l-\epsilon_{2},l]$ with some 
$\epsilon_{1},\epsilon_{2} > 0$ assume that the restriction of $P$ to 
$V_{vertex}$ is diagonal with diagonal entries zero or one and, in the case of 
$\delta$-type interactions, that $\alpha(y)=\alpha_0\geq 0$ for those $y$. 
Then the quadratic form $Q^{(2)}_{P,L}$ is regular.
\end{theorem}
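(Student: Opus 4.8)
The plan is to establish $H^2$-regularity of functions in $\cD(H)$ by a local analysis: regularity away from the diagonals and away from the vertices is standard interior elliptic regularity, so the work concentrates on neighbourhoods of (a) the diagonal strips $\{x=y\}$ inside each $D_{ee}$, and (b) the vertices, i.e. the corners of the rectangles and triangles. Since $P$ and $L$ are assumed block-diagonal with respect to $\kz^{n(E)}=V_{contact}\oplus V_{vertex}$, these two mechanisms decouple and can be treated separately. For the contact part one uses that, in the $\delta$- or hardcore cases, the boundary conditions \eqref{BCdelta1}--\eqref{BCdelta2} are exactly the well-understood transmission (resp.\ Dirichlet) conditions across a smooth interface; reflecting $\psi^+_{ee}$ and $\psi^-_{ee}$ across the diagonal and passing to sum/difference coordinates $u=\tfrac{1}{\sqrt2}(x+y)$, $v=\tfrac{1}{\sqrt2}(x-y)$ turns the pair of triangles with the interface condition into a single rectangle with either Neumann-type (for the symmetric combination, picking up the $\alpha$ term as a Robin condition) or Dirichlet/Neumann (for the antisymmetric combination and for hardcore) conditions. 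Standard elliptic regularity for the Laplacian on a rectangle with such conditions — using that $\alpha$ is Lipschitz, which is what the Definition requires — then gives $H^2$ near the diagonal.

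For the vertex part I would follow the strategy of \cite{BolKer11} essentially verbatim. The only genuinely new feature is the geometry: besides the rectangles $D_{e_1e_2}$ one now has the triangles $D^\pm_{ee}$, whose corners along the original edges meet at right angles (those are fine), but which also have corners with opening angle $\pi/4$ where the diagonal meets an edge. Near a $\pi/4$ corner the relevant boundary conditions are: on the edge side, the vertex conditions; on the diagonal side, the transmission/Dirichlet conditions. Because the hypothesis forces $P|_{V_{vertex}}$ to be, near $y\in[0,\epsilon_1]\cup[l-\epsilon_2,l]$, a constant coordinate projector with entries $0$ or $1$ — i.e.\ locally pure Dirichlet or pure Neumann on the edge part — and $\alpha$ to be constant there, the corner problem reduces, after the reflection trick, to the Laplacian on a wedge of angle $\pi/4$, or after unfolding to $\pi/2$ or $\pi$, with homogeneous Dirichlet/Neumann data. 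For such mixed D/N problems on a wedge of angle $\le\pi/2$ the solution lies in $H^2$ (the critical angle for loss of $H^2$-regularity with mixed conditions is $\pi/2$, and pure conditions tolerate up to $\pi$), so no singular exponents $r^{\pi/(2\omega)}$ with exponent $<1$ appear. This is precisely why the extra hypotheses near the endpoints $y=0,l$ are imposed: they rule out the $y$-dependent or non-trivial mixing of boundary conditions that would spoil corner regularity.

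Concretely the steps are: (1) cover $D^\ast_\Gamma$ by interior patches, patches straddling a diagonal, patches near an edge-midpoint-type vertex corner, and patches near a $\pi/4$ corner; (2) on interior and diagonal-straddling patches invoke interior regularity and the flat-interface regularity just described, using the $C^3$ hypothesis on $P$ and Lipschitz $\alpha$, $L$ so that the coefficients of the transformed problem are regular enough; (3) on vertex patches with angle $\ge\pi/2$ use the $C^3$-smoothness of $P$ together with the block structure of $P_{vert}$ (the extra block-diagonality assumption on $P_{vert}$ imposed just before the theorem) to apply the $H^2$-regularity result for the Laplacian with smoothly-varying mixed boundary conditions exactly as in \cite{BolKer11}; (4) on the $\pi/4$-corner patches, use the endpoint hypotheses ($P|_{V_{vertex}}$ a constant $\{0,1\}$-projector, $\alpha\equiv\alpha_0$) and the reflection/unfolding argument to reduce to a wedge problem of angle $\le\pi/2$ with homogeneous constant-coefficient conditions, which is in $H^2$; (5) patch the local estimates together with a partition of unity to conclude $\Psi\in H^2(D^\ast_\Gamma)$, hence the form is regular and, by Proposition~\ref{SFGeneral}, $\cD(H)=\cD_2(P,L)$.

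The main obstacle is step (4): controlling the corners of opening angle $\pi/4$ where a diagonal interface meets an edge carrying vertex boundary conditions. In general elliptic regularity on domains with corners degenerates once the opening angle exceeds the threshold determined by the boundary conditions, and a naive estimate (as noted already in the proof of Proposition~\ref{2quadformDelta} for the analogous trace bound) simply fails there. The resolution is not an estimate but a reduction: the reflection across the diagonal that already proved useful for the $\delta$-interface, combined with a second reflection across the edge — legitimate precisely because near the endpoints the edge condition is pure Dirichlet or pure Neumann and $\alpha$ is constant, so reflection preserves the (transmission and boundary) conditions — unfolds the $\pi/4$ wedge into a half-disc or full disc on which the Laplacian with the inherited constant-coefficient conditions is manifestly $H^2$-regular. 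Verifying that the reflections are compatible with both the diagonal transmission condition \emph{and} the vertex condition simultaneously, i.e.\ that the unfolded function genuinely lies in $H^2$ of the larger domain and not merely in $H^1$ with an $L^2$ Laplacian, is the delicate point; everything else is a routine adaptation of \cite{BolKer11}.
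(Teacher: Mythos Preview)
Your proposal is correct and follows essentially the same route as the paper. The paper also localises to the $\pi/4$ corners (citing \cite{BolKer11} for everything else), decomposes $\psi_{ee}$ into its symmetric and antisymmetric parts $\psi_{ee,B}$, $\psi_{ee,F}$ with respect to the diagonal reflection --- which is exactly your ``reflect across the diagonal and pass to sum/difference'' step --- observes that $\psi_{ee,F}$ then has Dirichlet data on the diagonal while $\psi_{ee,B}$ has a constant Robin condition (since $\alpha\equiv\alpha_0$ there), and combines this with the pure Dirichlet/Neumann condition on the adjacent edge forced by the endpoint hypothesis on $P|_{V_{vertex}}$. The only difference is the final justification of $H^2$-regularity at the resulting $\pi/4$ corner with mixed D/N/Robin data: the paper simply invokes the classical corner-regularity theory of Dauge and Grisvard, whereas you propose an explicit second reflection across the edge to unfold the wedge. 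Both are valid; the citation is shorter, your unfolding is more self-contained and makes transparent why the constancy of $\alpha$ and the $\{0,1\}$-projector hypothesis are exactly what is needed for the reflections to match up.
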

\begin{proof} 
First note that it is enough to show regularity near the corners of 
$D_{ee}=D^+_{ee}\cup D^-_{ee}$ adjacent to the diagonal. The regularity away from
the diagonal of $D_{ee}$ as well as regularity in the rectangles $D_{e_{1}e_{2}}$  
with $e_{1} \neq e_{2}$ was already established in \cite{BolKer11}. In addition, 
the regularity along the diagonals in the interior of $D_{ee}$ can be readily 
established using the same methods as in \cite{BolKer11}. 

The assumptions made on $P$ imply that on the edges of the squares $D_{ee}$
the functions in $\cD_{Q^{(2)}}$ satisfy either Dirichlet- or Neumann boundary 
conditions near the corners. Along diagonals we consider the projections
\begin{equation}
\label{diagBF}
\begin{split}
 \psi_{ee,B}(x,y):=&\frac{1}{2}\bigl[\psi_{ee}(x,y)+\psi_{ee}(y,x)\bigr]\ , \\
 \psi_{ee,F}(x,y):=&\frac{1}{2}\bigl[\psi_{ee}(x,y)-\psi_{ee}(y,x)\bigr]\ .
\end{split}
\end{equation}
The goal is to show that, close to the corners, both $\psi_{ee,B}$ and 
$\psi_{ee,F}$ are of class $H^{2}$. For that purpose one introduces suitable 
cut-offs that restrict the functions \eqref{diagBF} to neighbourhoods of the 
corners. This eventually implies that $\psi_{ee}\in H^{2}(D^\ast_{ee})$.

We recall the conditions \eqref{derjump} which imply that
\begin{equation}
 \partial_n\psi^\pm_{ee,B} - \frac{\alpha}{2\sqrt{2}}\psi^\pm_{ee,B} = 0 \ ,
\end{equation}
on the diagonal. Hence, $\psi^\pm_{ee,B}$ satisfies (variable) Robin boundary 
conditions on the diagonal. By construction, $\psi^\pm_{ee,F}$ vanishes on the 
diagonal so that near the corners of $D^\pm_{ee}$ adjacent to the diagonal, 
where $\alpha$ is supposed to be constant, $\psi^{\pm}_{ee,B/F}$ satisfies a 
combination of Dirichlet-, Neumann- or standard Robin-boundary conditions. 
In all such cases regularity is well known to hold \cite{Dau88,Gri85}.
\end{proof}
One naturally expects the two-particle operators representing contact 
interactions to possess purely discrete spectra of the form
$\lambda_0\leq\lambda_1\leq\lambda_2\leq\dots$ (i.e., eigenvalues are
counted with their multiplicities and do not accumulate at any finite value).
Moreover, the asymptotic distribution of eigenvalues, as given by the 
asymptotic behaviour of the eigenvalue-counting function
\begin{equation}
\label{evcount}
 N(\lambda) := \#\{n;\ \lambda_n\leq\lambda\}\ ,
\end{equation}
should follow a Weyl law. We shall now prove a Weyl law for repulsive contact 
interactions. This includes hardcore- and $\delta$-interactions with 
$\alpha\geq 0$. The general requirement is that $L_{contact}$ is negative 
definite (compare \eqref{BCdelta2}).
\begin{prop}
\label{SpectrumDelta}
Let $(-\Delta_{2},\cD_{2}(P,L))$ be a self-adjoint realisation of the 
two-particle Laplacian with repulsive contact interaction as described in 
Proposition~\ref{SFGeneral}. Then this operator has compact resolvent.
In particular, its spectrum is purely discrete and only accumulates at
infinity. Furthermore, the counting function \eqref{evcount} obeys the Weyl
law
\begin{equation}
\label{Weyl}
 N(\lambda) \sim \frac{\cL^2}{4\pi}\,\lambda\ ,\quad\lambda\to\infty\ ,
\end{equation}
where $\cL=\sum_{e=1}^{E}l_{e}$ is the total length of the graph. 
\end{prop}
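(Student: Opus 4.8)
The plan is to establish the two assertions—compactness of the resolvent and the Weyl asymptotics—by a Dirichlet–Neumann bracketing argument, following the strategy used in \cite{BolKer11}. Throughout I work with the quadratic form $Q^{(2)}_{P,L}$ rather than the operator directly, since the min-max principle reduces both claims to comparisons of eigenvalue counting functions, and form-domain inclusions translate into inequalities between counting functions.

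First I would reduce to a model problem. The operator is a perturbation of the free two-particle Laplacian on the dissected configuration space $D^\ast_\Gamma$; since $L_{contact}$ is negative definite (repulsive case) and the vertex part of $L$ is a fixed bounded self-adjoint matrix, the boundary term $\int_0^1\langle\Psi_{bv}(y),L(y)\Psi_{bv}(y)\rangle\,\ud y$ is controlled by \eqref{normeq}, which gives $Q^{(2)}_{P,L}[\Psi]\geq c\,\|\nabla\Psi\|^2 - C\,\|\Psi\|^2$ for suitable $c>0$, $C\in\rz$. Hence $-\Delta_2 + C + 1$ is bounded below by a positive multiple of the Neumann Laplacian on $D^\ast_\Gamma$ (i.e., on the disjoint union of the rectangles $D_{e_1e_2}$ and the triangles $D^\pm_{ee}$, with no coupling). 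The Neumann Laplacian on each such Lipschitz domain has compact resolvent (Rellich), so by domain monotonicity of the form and the min-max principle, $(-\Delta_2,\cD_2(P,L))$ has compact resolvent; its spectrum is therefore purely discrete and accumulates only at infinity.

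For the Weyl law I would sandwich $Q^{(2)}_{P,L}$ between two decoupled quadratic forms obtained by adding Dirichlet, respectively Neumann, conditions on all the remaining internal and external boundary pieces (the vertex boundaries and the diagonals), so that in both cases the form decouples into a sum over the pieces $D_{e_1e_2}$ ($e_1\neq e_2$) and $D^\pm_{ee}$. Enlarging the form domain by imposing Neumann conditions gives a lower bound for $-\Delta_2$ that yields an upper bound $N(\lambda)\leq N^{\mathrm N}(\lambda)$; shrinking it by imposing Dirichlet conditions gives $N(\lambda)\geq N^{\mathrm D}(\lambda)$. Each decoupled piece is a rectangle $(0,l_{e_1})\times(0,l_{e_2})$ or a right triangle with legs $l_e$, for which the two-dimensional Weyl asymptotics are classical: $N^{\mathrm{D/N}}_{D}(\lambda)\sim \frac{|D|}{4\pi}\lambda$. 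Summing, $\sum_{e_1\neq e_2}l_{e_1}l_{e_2} + \sum_e 2\cdot\tfrac12 l_e^2 = \sum_{e_1,e_2}l_{e_1}l_{e_2} = \cL^2$, so both bounds have leading term $\frac{\cL^2}{4\pi}\lambda$. The bounded perturbation by $L$ and the shift $C$ affect only lower-order terms (the min-max eigenvalues are shifted by $O(1)$), so the sandwich closes and \eqref{Weyl} follows.

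The main obstacle is handling the triangular pieces $D^\pm_{ee}$ and the diagonal boundary: one must verify that imposing Dirichlet or Neumann conditions along the diagonal genuinely decouples the form (which is where the block-diagonal structure of $P$ and $L$ with respect to \eqref{bvblocks}, and the separation of $V_{contact}$ into per-edge blocks, is used), and that the Weyl asymptotics hold for the Dirichlet and Neumann Laplacians on these triangles with their $\pi/4$ corners. The latter is standard—Weyl's law with the area term is known for arbitrary bounded Lipschitz (indeed polygonal) domains, and the $\pi/4$ corners contribute only to the $O(\sqrt\lambda)$ correction—but it is the point at which one cannot merely cite the rectangular case from \cite{BolKer11} and must invoke results such as those in \cite{Dau88,Gri85} (or the reflection trick already used in the proof of Proposition~\ref{2quadformDelta}, which maps each triangle to half of a square and reduces the count to that of a square with mixed boundary conditions). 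Once the decoupled counting functions are identified, the remaining arithmetic is the routine area computation above.
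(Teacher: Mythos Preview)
Your proof is correct and follows essentially the same bracketing strategy as the paper: compactness of the resolvent via equivalence of the form norm with the $H^1(D^\ast_\Gamma)$-norm and the Rellich embedding, and the Weyl law via comparison with Dirichlet and Neumann/Robin Laplacians on the dissected domain. The only minor difference is that the paper chooses as upper comparison the form with $P_R=0$ and $L_R=\diag(\lambda,\dots,\lambda,0,\dots,0)$, $\lambda=\|\Lambda\|_{\op}$, so that the form inequality $Q_R\leq Q^{(2)}_{P,L}$ holds exactly (using $L_{contact}\leq 0$), whereas you take pure Neumann and absorb the discrepancy coming from $L_{vert}$ as a relatively form-bounded perturbation; both routes give the same leading term, and your more explicit treatment of the triangular pieces $D^\pm_{ee}$ is something the paper simply asserts without comment.
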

\begin{proof}
The Hilbert space $H^{1}(D^{*}_{\Gamma})$ is compactly embedded in 
$L^{2}(D^{*}_{\Gamma})$. Since the form norm $||\cdot||_{Q^{(2)}}$ is equivalent
to the $H^{1}(D^{*}_{\Gamma})$-norm, the Hilbert space 
$(\cD_{Q^{(2)}},||\cdot||_{Q^{(2)}})$ is also compactly embedded in 
$L^{2}(D^{*}_{\Gamma})$. Hence the operator associated with the quadratic form
has compact resolvent \cite{Dob05}.

The Weyl law follows from a standard bracketing argument \cite{ReeSim78} based 
on a comparison with two suitable operators (quadratic forms), see also
\cite{BolEnd09,BolKer11}.

The first operator, $(-\Delta_2,\cD_{2}(P_D,L_D))$, is the 
Dirichlet-Laplacian, and is characterised by the projector $P_D=\eins$ as well 
as $L_D=0$. Given an operator $(-\Delta_{2},\cD_{2}(P,L))$ the second comparison
operator, $(-\Delta_2,\cD_{2}(P_R,L_R))$, is the Laplacian given by the 
projector $P_{R}=0$ as well as 
\begin{equation}
 L_R=\diag(\underbrace{\lambda,\dots,\lambda}_{4E^{2}-\text{times}},
 \underbrace{0,\dots,0}_{2E-\text{times}}) \ ,
\end{equation}
where $\lambda=||\Lambda||_{\op}$. The associated quadratic forms therefore 
satisfy the following inclusions of their domains,
\begin{equation}
 \cD_2(P_D,L_D) \subseteq \cD_2(P,L) \subseteq \cD_2(P_{R},L_{R})\ .
\end{equation}
Hence \cite{ReeSim78}, it follows that the related eigenvalue-counting
functions satisfy
\begin{equation}
\label{Weylbounds}
 N_D(\lambda)\leq N(\lambda)\leq N_R(\lambda)\ .
\end{equation}
As both $N_D$ and $N_R$ satisfy the Weyl law \eqref{Weyl}, the same asymptotics
hold for $N(\lambda)$. 
\end{proof}

\section{Contact interactions for bosons}
\label{3sec}
So far we assumed to have two non-identical particles on a graph. We now
implement an exchange symmetry for two identical bosonic particles. Their 
states are described in the symmetric two-particle Hilbert-space 
$\cH_{2,B}=\cH_1\otimes_s\cH_1$. The orthogonal projection $\Pi_B$ from 
$\cH_2=L^2(D^\ast_\Gamma)$ to the bosonic subspace 
\begin{equation}
 L^2_B(D^\ast_\Gamma)=\Pi_B L^2(D^\ast_\Gamma)
\end{equation}
acts on components of
$\Psi=(\psi_{e_1e_2})$ with $e_1\neq e_2$ as
\begin{equation}
 (\Pi_{B}\Psi)_{e_{1}e_{2}}(x_{e_{1}},y_{e_{2}})=
 \frac{1}{2}\bigl(\psi_{e_{1}e_{2}}(x_{e_{1}},y_{e_{2}})+
 \psi_{e_{2}e_{1}}(y_{e_{2}},x_{e_{1}})\bigr)\ ,
\end{equation}
whereas on the components with $e_1=e_2$ its action reads
\begin{equation}
 (\Pi_{B}\Psi)^\pm_{ee}(x_e,y_e) = \frac{1}{2}\bigl(\psi^\pm_{ee}(x_e,y_e)+
 \psi^\mp_{ee}(y_e,x_e)\bigr)\ .
\end{equation}
Due to this symmetry it would be sufficient to keep only components with
$e_1<e_2$ in addition to the diagonal components with $e_1=e_2$. For
simplicity, when comparing to the previous section we, however, keep all 
components. We then denote the images of function spaces under the projection 
to their bosonic subspaces as, e.g.,
\begin{equation}
 H^m_B(D^\ast_\Gamma)=H^m(D^\ast_\Gamma)\cap\cH_{2,B}\ .
\end{equation}
We note that whenever $\Psi\in\cH_{2,B}$ is in $H^2(D^\ast_{\Gamma})$, the 
underlying symmetry implies the relations
\begin{equation}
 \psi_{e_1e_2,x}(x_{e_1},y_{e_2})=\psi_{e_2e_1,y}(y_{e_2},x_{e_1})
 \quad\text{and}\quad
 \psi_{e_1e_2,xx}(x_{e_1},y_{e_2})=\psi_{e_2e_1,yy}(y_{e_2},x_{e_1})\ ,
\end{equation}
when $e_1\neq e_2$, as well as
\begin{equation}
 \psi^\pm_{ee,x}(x_e,y_e)=\psi^\mp_{ee,y}(y_e,x_e)
 \quad\text{and}\quad
 \psi^\pm_{ee,xx}(x_e,y_e)=\psi^\mp_{ee,yy}(y_e,x_e)\ .
\end{equation}
Due to the bosonic symmetry it is possible to reduce the number of
components in the vectors of boundary values \eqref{bvdef}. When 
$e_1\neq e_2$, it suffices to keep the upper two components in each of
the vectors \eqref{bvnondiag}, whereas for $e_1=e_2$ we use
\begin{equation}
\label{BVGeneralBosons}
 \psi_{ee,bv}(y) = \begin{pmatrix}  
 \sqrt{l_e}\psi^{-}_{ee}(0,l_e y) \\ \sqrt{l_e}\psi^{+}_{ee}(l_e,l_e y) \\ 
 \sqrt{l_e}\psi^{+}_{ee}(l_e y,l_e y) 
 \end{pmatrix}\qquad\text{and}\qquad
 \psi^{'}_{ee,bv}(y) = \begin{pmatrix} 
 \sqrt{l_e}\psi^{-}_{ee,x}(0,l_e y) \\ -\sqrt{l_e}\psi^{+}_{ee,x}(l_e,l_e y) \\ 
 \sqrt{2l_e}\psi^{+}_{ee,n}(l_e y,l_e y)
 \end{pmatrix}\ ,
\end{equation}
with $y\in[0,1]$. The space of boundary values therefore has dimension 
$n_{B}(E)=2E^{2}+E$, and decomposes in analogy to \eqref{bvblocks}.

We also need the bounded and measurable maps $P,L:[0,1]\to\M(n_{B}(E),\kz)$, 
where
\begin{enumerate}
\item $P(y)$ is an orthogonal projector,
\item $L(y)$ is a self-adjoint endomorphism on $\ker P(y)$,
\end{enumerate}
for a.e.\ $y \in [0,1]$. The space $\kz^{n_B(E)}$ of boundary values decomposes
in the same way as \eqref{bvblocks}, however, the edge-$e$ subspaces are now
one-dimensional. This forces the equivalent of \eqref{BCdelta1} to be a 
projector on $\kz$ and to take values
\begin{equation}
\label{BCdelta1_B}
 P_{contact,e}(y) \in \{0,1\}\ .
\end{equation}
Here $P_{contact,e}(y)=1$ corresponds to a Dirichlet condition in the point 
$(l_e y,l_e y)$ along the diagonal, whereas $P_{contact,e}(y)=0$ imposes no 
condition. Hence, when $\delta$-type interactions are considered we choose 
$P_{contact,e}(y)=0$, and in the case of hardcore-interactions $P_{contact,e}(y)=1$
is chosen. Likewise, the equivalent of \eqref{BCdelta2} is
\begin{equation}
\label{BCdeltaB}
 L_{contact,e}(y) = -\frac{1}{2}\,\alpha(y) 
\end{equation}
for $\delta$-interactions, and $L_{contact,e}(y) =0$ for interactions of hardcore
type.

We can now set up the following quadratic form,
\begin{equation}
\label{QformDeltaGeneralBosons}
 Q^{(2),B}_{P,L}[\psi] = 2\langle\Psi_x,\Psi_x\rangle_{L^2_B(D^\ast_\Gamma)}
-2\int_0^1 \langle\Psi_{bv}(y),L(y)\Psi_{bv}(y)\rangle_{\kz^{n_B(E)}}\ \ud y \ ,
\end{equation}
with domain
\begin{equation}
\label{DefquadDeltaBosons}
 \cD_{Q^{(2),B}} = \{ \Psi\in H^1_B(D^\ast_\Gamma);\ P(y)\Psi_{bv}(y)=0\ 
 \text{for a.e.}\ y \in [0,1]\}\ .
\end{equation}
As this is the restriction of a quadratic form on $L^2(D^\ast_\Gamma)$ to
$L^2_B(D^\ast_\Gamma)$, all results of Section~\ref{2sec} carry over: 
Propositions~\ref{2quadformDelta} and \ref{SFGeneral} imply that the 
quadratic form is closed and semi-bounded; when $P$ is of class $C^1$ and 
the form is regular, the associated self-adjoint operator is the bosonic 
two-particle Laplacian $-\Delta_{2,B}$ with domain
\begin{equation}
\begin{split}
 \cD_{2,B}(P,L) := \{
       &\Psi\in H^2_B(D^\ast_\Gamma);\ P(y)\Psi_{bv}(y)=0\ \text{and}\\
       &\quad Q(y)\Psi'_{bv}(y)+L(y)Q(y)\Psi_{bv}(y)=0\ \text{for a.e.}\
          y\in [0,1] \} \ .
\end{split}
\end{equation}
According to Theorem~\ref{Regular}, when the consitions of that theorem are 
fulfiled the quadratic forms leading to $\delta$-type and hardcore-interactions
are regular. 

We remark that for $\delta$-interactions one can use the decomposition
\eqref{bvblocks} of the space of boundary values and the explicit expression
\eqref{BCdeltaB} to rewrite the quadratic form as
\begin{equation}
\label{QformDeltaBalt}
\begin{split}
 Q^{(2),B}_{P,L}[\psi] 
   &= 2\langle\Psi_x,\Psi_x\rangle_{L^2_B(D^\ast_\Gamma)} 
      -2\int_0^1 \langle\Psi_{bv,vert}(y),L_{vert}(y)\Psi_{bv,vert}(y)
           \rangle_{\kz^{2E^2}}\ \ud y \\
   &\quad + \sum_{e=1}^E\int_0^1\alpha(y)\,|\sqrt{l_e}
          \psi_{ee}^+(l_e y,l_e y)|^2\ \ud y \ .
\end{split}
\end{equation}
In the same way, the form domain takes the form
\begin{equation}
\label{DefquadDeltaBalt}
 \cD_{Q^{(2),B}} = \{ \Psi\in H^1_B(D^\ast_\Gamma);\ P_{vert}(y)\Psi_{bv,vert}(y)=0\ 
 \text{for a.e.}\ y \in [0,1]\}\ .
\end{equation}

Due to the bosonic projection $\Pi_B$, which commutes with any of the 
two-particle Laplacians, asymptotically half of the spectrum of a Laplacian is 
projected to the bosonic Hilbert space $\cH_{2,B}$, so that 
Proposition~\ref{SpectrumDelta} implies the Weyl law
\begin{equation}
\label{WeylB}
 N_B(\lambda) \sim \frac{\cL^2}{8\pi}\,\lambda\ ,\quad\lambda\to\infty\ ,
\end{equation}
for the asymptotics of the eigenvalue count restricted to $\cH_{2,B}$.

Our goal now is to study bosonic many-particle systems on graphs. Eventually,
these have to be described in the bosonic Fock space over the one-particle
Hilbert space. Since it suffices, however, to consider each $N$-particle
space separately, we here only consider a fixed particle number $N$. In that
context we shall introduce two-particle interactions that are (formally) of 
the type,
\begin{equation}
\label{formalintHam}
 H_N = -\Delta_N +\sum_{i<j}\alpha(x_i)\,\delta(x_i-x_j)\ .
\end{equation}
Due to the bosonic symmetry, on suitable functions the quadratic form 
associated with such an operator will be
\begin{equation}
\label{Nbosquad}
\begin{split}
 &\langle\Psi,H_N\Psi\rangle_{\cH_{2,B}} = 
      \langle\Psi,-\Delta_N\Psi\rangle_{\cH_{2,B}} \\
 &\qquad+ \frac{N(N-1)}{2}\sum_{e_2\dots e_N}\int_0^{l_{e_2}}\dots\int_0^{l_{e_N}}
      \alpha(x_{e_2})|\psi_{e_2 e_2\dots e_N}(x_{e_2},x_{e_2},\dots,x_{e_N})|^2
      \ \ud x_{e_N}\dots\ud x_{e_2}\ .
\end{split}
\end{equation}
From \eqref{formalintHam} and \eqref{Nbosquad} one concludes that contact 
interactions involve boundary values along hypersurfaces that are characterised
by the fact that two particles are at the same position. 

The configuration space of $N$ (distinguishable) particles is
\begin{equation}
 D^N_\Gamma = \bigcup_{e_1e_2...e_{N}} D_{e_1e_2...e_{N}}\ ,
\end{equation}
where $D_{e_1e_2...e_{N}}=(0,l_{e_1})\times\dots\times (0,l_{e_N})$. We stress that
this notation includes cases where several particles are on the same edge, in
which case the same edge appears repeatedly. The hyperplanes that determine 
contact interactions are characterised by equations $x^i_e = x^j_e$, meaning 
that particles $i$ and $j$ sit on the same position on edge $e$. In analogy to 
\eqref{Ddecomp}, in order to implement contact interactions we have to 
decompose $D^N_\Gamma$ further; this involves all hyperrectangles 
$D_{e_1e_2...e_{N}}$ that are composed of at least two coinciding edges. 

Now assume that $(n_1,\dots,n_E)$ is a partition of $N$ such that there are 
$n_{e}$ particles on edge $e$. Let $\sigma\in S_N$ assigns labels to the $N$ 
particles in such a way that $\sigma(1),\dots,\sigma(n_e)$ label the particles 
on edge $e$, with coordinates $x^{\sigma(1)}_{e},\dots,x^{\sigma(n_{e})}_{e}$. 
Permutations $\pi\in S_{n_1}\times\dots\times S_{n_E}\subset S_N$ of particle 
labels then leave the assignment to edges untouched, and there exists such a 
permutation with
\begin{equation}
 x^{\pi(\sigma(1))}_{e} \leq\dots\leq x^{\pi(\sigma(n_{e}))}_{e}\ ,\quad\forall e\in
 \{1,\dots,E\}\ .
\end{equation}
These relations define a polyhedral subdomain of $D_{e_{1}...e_{N}}$. Every other
permutation $\pi\in S_{n_1}\times\dots\times S_{n_E}$ will produce a copy of
that polyhedral subdomain that emerges through reflections in a succession of  
boundary hyperplanes. We will enumerate these $n_{1}!...n_{E}!$ subdomains as 
$D_{e_{1}...e_{N}}^\eta$, with $1\leq\eta\leq n_{1}!...n_{E}!$. 

In analogy to \eqref{Ddecomp} we now introduce the dissected hyperrectangles
as the disjoint union
\begin{equation}
 D^{\ast}_{e_{1}...e_{N}}=\dot{\bigcup_{\eta}}\ D_{e_{1}...e_{N}}^\eta \ .
\end{equation}
The $N$-particle Hilbert space $\cH_{N}$ for $N$ (distinguishable) particles 
with contact interactions can then be defined as
\begin{equation}
\label{HilbertSpaceContactI}
 L^2(D^{N\ast}_\Gamma)= \bigoplus_{e_1e_2...e_{N}} L^2(D^{*}_{e_1e_2...e_{N}}) \ .
\end{equation}
The corresponding Sobolev spaces are defined in the same way. Note that on
the right-hand side $D^{*}_{e_1e_2...e_{N}}=D_{e_1e_2...e_{N}}$ when all edges in
the definition of $D_{e_1e_2...e_{N}}$ are distinct, i.e., when no two particles
are on the same edge. With this proviso, we denote functions on
$D^{*}_{e_1e_2...e_{N}}$ by $\psi_{e_{1}...e_{N}}$, which themselves consist of 
$n_{1}!...n_{E}!$ components defined on the subdomains $D_{e_{1}...e_{N}}^\eta$,
compare \eqref{Ddecomp} and below.

The projection $\Pi_B$ from \eqref{HilbertSpaceContactI} to the bosonic 
$N$-particle Hilbert space $L^2_{B}(D^{N\ast}_\Gamma)$ then is given by
\begin{equation}
 (\Pi_{B}\Psi)_{e_{1}...e_{N}} = 
 \frac{1}{N!}\sum_{\pi \in S_{N}} \psi_{\pi(e_{1})...\pi(e_{N})}
 (x^{\pi(1)}_{\pi(e_{1})},...,x^{\pi(N)}_{\pi(e_{N})}) \ .
\end{equation}

In analogy to \eqref{fctcont} and \eqref{derjump}, two-particle interactions 
of a $\delta$-type \eqref{formalintHam} require boundary values of functions
$\Psi \in H^1_{B}(D^{N\ast}_\Gamma)$ and their normal derivatives along (internal)
boundary hyperplanes of the dissected hyperrectangles $D^{*}_{e_1e_2...e_{N}}$. In 
addition, boundary conditions at vertices have to be implemented. For those 
purposes the most convenient expression for the quadratic form is an
analogue of \eqref{QformDeltaBalt}.

We first introduce the vectors of boundary values in vertices. Due to the 
bosonic symmetry these can be given in the form
\begin{equation}
\label{BVI}
 \Psi_{bv,vert}(\vy) = \begin{pmatrix} \sqrt{l_{e_{2}}\dots l_{e_{N}}} 
 \psi_{e_{1}\dots e_{N}}(0,l_{e_{2}}y_{1},\dots,l_{e_{N}}y_{N-1}) \\ 
 \sqrt{l_{e_{2}}\dots l_{e_{N}}} 
 \psi_{e_{1}\dots e_{N}}(l_{e_{1}},l_{e_{2}}y_{1},\dots,l_{e_{N}}y_{N-1}) \end{pmatrix} \ ,
\end{equation}
and
\begin{equation}
\label{BVII}
 \Psi^{'}_{bv,vert}(\vy) = \begin{pmatrix} \sqrt{l_{e_{2}}\dots l_{e_{N}}}  
 \psi_{e_{1}\dots e_{N},x^{1}_{e_{1}}}(0,l_{e_{2}}y_{1},\dots,l_{e_{N}}y_{N-1}) \\ 
 -\sqrt{l_{e_{2}}\dots l_{e_{N}}} 
 \psi_{e_{1}\dots e_{N},x^{1}_{e_{1}}}(l_{e_{1}},l_{e_{2}}y_{1},\dots,l_{e_{N}}y_{N-1})  
 \end{pmatrix} \ ,
\end{equation}
where $\vy=(y_{1},\dots,y_{N-1})\in [0,1]^{N-1}$. On these (vertex related)
boundary values the bounded and measurable maps 
$P_{vert},L_{vert}: [0,1]^{N-1} \to \M(2E^{N},\kz)$ shall act, which are required 
to fulfil
\begin{enumerate}
\item $P_{vert}(\vy)$ is an orthogonal projector,
\item $L_{vert}(\vy)$ is a self-adjoint endomorphism on $\ker P_{vert}(\vy)$,
\end{enumerate}
for a.e. $\vy\in [0,1]^{N-1}$. 

Boundary values on internal hyperplanes in the dissected hyperrectangles
involve components $\psi^\eta_{e_1\dots e_N}$ with a pair of coinciding edges,
$e_i =e_j$. Due to the exchange symmetry we can always arrange for these
edges to be $e_1$ and $e_2=e_1$. Permuting a given pair $(e_i,e_j)$ to
$(e_1,e_2)$, however, involves a change of the associated domain 
$D^\eta_{e_1\dots e_N}$ to some other copy $D^{\eta'}_{e'_1\dots e'_N}$. This means
that $\psi^\eta_{e_1\dots e_N}$ is replaced by $\psi^{\eta'}_{e'_1 e'_1\dots e'_N}$.

In analogy to \eqref{QformDeltaBalt} the quadratic form we wish to set up is
\begin{equation}
\label{QuadFormContact}
\begin{split}
 Q^{(N)}_{B}[\Psi] 
  &= N \sum_{e_{1}\dots e_{N}}\int_{0}^{l_{e_{1}}}\dots\int_{0}^{l_{e_{N}}} 
    |\psi_{e_{1}\dots e_{N},x_{e_1}}(x_{e_{1}},\dots,x_{e_{N}})|^{2}\ \ud x_{e_N}\dots
    \ud x_{e_1} \\
  &\quad -N\int_{[0,1]^{N-1}}\langle\Psi_{bv,vert},L_{vert}(\vy)\Psi_{bv,vert} 
    \rangle_{\kz^{2E^{N}}} \ud\vy \\
  &\quad +\frac{N(N-1)}{2}\sum_{e_{2}...e_{N}}\int_{[0,1]^{N-1}} \alpha(y_1)\ 
    |\sqrt{l_{e_{2}}\dots l_{e_{N}}}\psi_{e_{2}e_{2}\dots e_{N}}
    (l_{e_2}y_1,\vl\vy)|^{2}\ d\vy \ .
\end{split}
\end{equation}
For convenience we here used the notation 
$\vl\vy=(l_{e_2}y_1,l_{e_3}y_2,\dots,l_{e_N}y_{N-1})$.

This form shall be defined on the domain
\begin{equation}
\begin{split}
 \cD_{Q^{(N)}_{B}} = \{\Psi \in H^{1}_{B}(D^{N\ast}_\Gamma);\ P_{vert}(\vy)
 \Psi_{bv,vert}(\vy)=0\ \text{for a.e.}\ \vy\in [0,1]^{N-1}\}\ .
\end{split}
\end{equation}
Using this, we can readily establish the following statements. These are 
immediate generalisations of the corresponding statements, 
Propositions~\ref{2quadformDelta} and \ref{SFGeneral}, for two bosons ($N=2$).
\begin{theorem}
Let the maps $P_{vert},L_{vert}:[0,1]^{N-1}\to \M(2E^{N},\kz)$ as well as the 
function $\alpha:[0,1]\to\kz$ be bounded and measurable. Then:
\begin{itemize}
\item[(i)] The quadratic form $Q^{(N)}_{B}$ defined on the domain 
$\cD_{Q^{(N)}_{B}}$ is closed and semi-bounded.
\item[(ii)] If $P_{vert}$ is of class $C^1$ and the form is regular, the 
associated self-adjoint operator is the $N$-particle Laplacian $-\Delta_N$
with domain
\begin{equation}
\begin{split}
 \cD_{N,B}(P,L) := \{
   &\Psi\in H^2_B(D^{N\ast}_\Gamma);\ P(\vy)\Psi_{bv}(\vy)=0\ \text{and}\\
   &\quad Q(\vy)\Psi'_{bv}(\vy)+L(\vy)Q(\vy)\Psi_{bv}(\vy)=0\ \text{for a.e.}\
          \vy\in [0,1]^{N-1} \} \ .
\end{split}
\end{equation}
Here $P=P_{contact}\oplus P_{vert}$ and $L=L_{contact}\oplus L_{vert}$ refer to all 
boundary values.
%
\end{itemize}
\end{theorem}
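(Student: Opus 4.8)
The plan is to reduce everything to the two-particle results of Section~\ref{2sec} by viewing the $N$-particle set-up as a family of two-variable boundary-value problems, exactly as the statement itself hints by calling these ``immediate generalisations'' of Propositions~\ref{2quadformDelta} and \ref{SFGeneral}. For part~(i), I would first rewrite the form \eqref{QuadFormContact} in the abstract shape of \eqref{QformDeltaGeneral}: collect all boundary values---the vertex-related ones \eqref{BVI}, \eqref{BVII} and the diagonal (contact) ones $\sqrt{l_{e_2}\cdots l_{e_N}}\,\psi_{e_2 e_2\dots e_N}(l_{e_2}y_1,\vl\vy)$---into a single vector $\Psi_{bv}(\vy)\in\kz^{M}$ with $M=2E^{N}+(\text{number of diagonal slots})$, and assemble $P=P_{contact}\oplus P_{vert}$, $L=L_{contact}\oplus L_{vert}$ with $P_{contact}$ a diagonal $0/1$ map and $L_{contact}$ built from $-\tfrac12\alpha(y_1)$ as in \eqref{BCdeltaB}. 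The only analytic input needed is the trace inequality: for all $\delta\le\delta_0$,
\begin{equation}
 \|\Psi_{bv}\|^{2}_{L^2([0,1]^{N-1})\otimes\kz^{M}} \le K\Bigl(\tfrac{2}{\delta}\|\Psi\|^{2}_{L^{2}(D^{N\ast}_\Gamma)} + \delta\,\|\nabla\Psi\|^{2}_{L^{2}(D^{N\ast}_\Gamma)}\Bigr)\ ,
\end{equation}
which gives the relative form-boundedness of the $L$-term and hence closedness and semi-boundedness by the standard KLMN-type argument used in the proof of Proposition~\ref{2quadformDelta}. The boundary hyperplanes here are products of an interval-trace (in one pair of variables) with a full rectangle in the remaining $N-2$ variables, so the bound follows edge-by-edge from the one-dimensional trace estimate tensored with the identity on the spectator variables; near the reflecting corners of the dissected domains one reflects functions across the relevant faces and applies the rectangle estimate from \cite{Kuc04}, precisely as in the $N=2$ proof.

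For part~(ii), assuming regularity (so that $\cD(H)\subseteq H^2_B(D^{N\ast}_\Gamma)$) and $P_{vert}\in C^1$, I would follow the proof of Proposition~\ref{SFGeneral} verbatim. The representation theorem gives, for each $\Psi\in\cD(H)$, a unique $\chi\in L^2(D^{N\ast}_\Gamma)$ with $Q^{(N)}_B[\Psi,\Phi]=\langle\chi,\Phi\rangle$ for all $\Phi\in\cD_{Q^{(N)}_B}$. Testing against $\Phi\in C^\infty_0(D^{N\ast}_\Gamma)$ and integrating by parts in each variable identifies $\chi=-\Delta_N\Psi$; here one uses the bosonic relations $\psi_{e_1\dots e_N,x^i}= \psi_{\dots,x^j}$ under the swap of the $i$th and $j$th coordinate to see that the $N\sum\int|\psi_{\dots,x_{e_1}}|^2$ form indeed reproduces the full Laplacian $-\Delta_N$ and that the $\delta$-term on the diagonal contributes the expected jump in the normal derivative \eqref{derjump}. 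For general $\Phi\in\cD_{Q^{(N)}_B}$ the integration by parts leaves the boundary term
\begin{equation}
 -\int_{[0,1]^{N-1}}\langle\Psi'_{bv}(\vy)+L(\vy)\Psi_{bv}(\vy),\Phi_{bv}(\vy)\rangle_{\kz^{M}}\ \ud\vy\ ,
\end{equation}
which must vanish. Invoking the $N$-particle analogue of Lemma~3.13 of \cite{BolKer11}---that $\{\Phi_{bv};\ \Phi\in\cD_{Q^{(N)}_B}\}$ is dense in $\ker\Pi\subset L^2([0,1]^{N-1})^{M}$---forces $\Psi'_{bv}+\Lambda\Psi_{bv}\in(\ker\Pi)^{\perp}$, i.e. $Q(\vy)\Psi'_{bv}(\vy)+Q(\vy)L(\vy)\Psi_{bv}(\vy)=0$, so that $\cD(H)=\cD_{N,B}(P,L)$.

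The main obstacle is the trace estimate near the low-angle corners of the dissected hyperrectangles $D^{\ast}_{e_1\dots e_N}$. When several particles occupy the same edge, the polyhedral subdomains $D^\eta_{e_1\dots e_N}$ meet the coordinate faces at dihedral angles smaller than $\pi/2$ (the two-particle case already produces the $\pi/4$ corners noted in the proof of Proposition~\ref{2quadformDelta}), and the simple one-dimensional trace argument degenerates there. The remedy is the same reflection trick: successively reflect a component $\psi^\eta_{e_1\dots e_N}$ across the bounding hyperplanes $x^i_e=x^j_e$ to extend it to a full hyperrectangle, on which the tensorised interval-trace bound applies; the reflections are compatible with the boundary conditions because $P_{contact}$ is $0/1$-valued (pure Dirichlet or Neumann on each diagonal slot) and $\alpha$ is assumed constant near the corners, so the extended function still lies in $H^1$. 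Once this local estimate is in hand, summing over the finitely many edges, partitions $(n_1,\dots,n_E)$ and subdomains $\eta$ completes the bound, and everything else is bookkeeping inherited from the $N=2$ case.
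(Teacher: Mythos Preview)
Your proposal is correct and follows exactly the route the paper indicates: the paper's own proof is the one-line remark that the result is an immediate extension of the proofs of Propositions~\ref{2quadformDelta} and \ref{SFGeneral} (and Theorem~\ref{Regular}), and you have simply spelled out those steps in the $N$-particle setting.

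One minor clarification is worth making. In your final paragraph on the trace estimate for part~(i), the reflection across the diagonal hyperplanes needs no information about $\alpha$ or about compatibility with the contact boundary conditions: the even reflection of an $H^1$ function across a hyperplane is always in $H^1$, which is all that is required to reduce the trace bound on a polyhedral subdomain to the rectangle estimate. Your parenthetical that ``$\alpha$ is assumed constant near the corners'' is therefore unnecessary for part~(i), and in fact is not among the hypotheses of the theorem you are proving---that constancy assumption belongs to the regularity statement (Theorem~\ref{Regular}), whose conclusion you are simply taking as a hypothesis in part~(ii).
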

The proof of this Theorem is an immediate extension of the proofs of
Propositions~\ref{2quadformDelta} and \ref{SFGeneral} as well as of 
Theorem~\ref{Regular}.

It is also immediately clear from the proof of Proposition~\ref{SpectrumDelta}
that any of the $N$-particle Laplacians $-\Delta_N$ with repulsive contact
interactions have compact resolvent
and hence possess purely discrete spectra, accumulating only at infinity.
Furthermore, the eigenvalue counting function (compare \eqref{evcount}
and \eqref{WeylB}) satisfies a Weyl law that follows from a bracketing 
argument in the same way as \eqref{Weyl}. For the case of $N$ distuingishable
particles the Weyl law is
\begin{equation}
 \label{WeylBN}
 N(\lambda) \sim \frac{\cL^N}{(4\pi)^{N/2}\Gamma(1+\frac{N}{2})}\,
 \lambda^{N/2}\ ,\quad\lambda\to\infty\ .
\end{equation}
This follows most easily from the lower bound given by the Dirichlet Laplacian
as in \eqref{Weylbounds}. The bosonic case requires to desymmetrise the 
spectrum with respect to particle exchange symmetry; hence, the bosonic
counting function is reduced by a factor of $\frac{1}{N!}$.

\vspace*{0.5cm}

\subsection*{Acknowledgement}
J K would like to thank the {\it Evangelisches Studienwerk Villigst} for
financial support through a Promotionsstipendium.

\vspace*{0.5cm}

{\small
\bibliographystyle{amsalpha}
\bibliography{literatur}}

\newcommand{\etalchar}[1]{$^{#1}$}
\def\cprime{$'$} \def\polhk#1{\setbox0=\hbox{#1}{\ooalign{\hidewidth
  \lower1.5ex\hbox{`}\hidewidth\crcr\unhbox0}}}
\providecommand{\bysame}{\leavevmode\hbox to3em{\hrulefill}\thinspace}
\providecommand{\MR}{\relax\ifhmode\unskip\space\fi MR }
\providecommand{\MRhref}[2]{%
  \href{http://www.ams.org/mathscinet-getitem?mr=#1}{#2}
}
\providecommand{\href}[2]{#2}
\begin{thebibliography}{EKK{\etalchar{+}}08}

\bibitem[AFK02]{AbFeKu02}
S.~Albeverio, S.~M. Fei, and P.~Kurasov, \emph{On integrability of many-body
  problems with point interactions}, Operator methods in ordinary and partial
  differential equations ({S}tockholm, 2000), Oper. Theory Adv. Appl., vol.
  132, Birkh\"auser, Basel, 2002, pp.~67--76.

\bibitem[AFK04]{AbFeKu04}
S.~Albeverio, S.-M. Fei, and P.~Kurasov, \emph{Gauge fields, point interaction
  and few-body problems in one dimension}, Rep. Math. Phys. \textbf{53} (2004),
  363--370.

\bibitem[BE09]{BolEnd09}
J.~Bolte and S.~Endres, \emph{The trace formula for quantum graphs with general
  self-adjoint boundary conditions}, Ann. H. Poincare \textbf{10} (2009),
  189--223.

\bibitem[BK11]{BolKer11}
J.~Bolte and J.~Kerner, \emph{Quantum graphs with singular two-particle
  interactions}, preprint, arXiv:1112.4751v1, 2011, p.~30.

\bibitem[CCG{\etalchar{+}}11]{Cazetal11}
M.~A. Cazalilla, R.~Citro, T.~Giamarchi, E.~Orignac, and M.~Rigol, \emph{{One
  dimensional Bosons: From Condensed Matter Systems to Ultracold Gases}}, {Rev.
  Mod. Phys.} \textbf{83} (2011), 1405--1466.

\bibitem[Dau88]{Dau88}
M.~Dauge, \emph{Elliptic boundary value problems on corner domains}, Lecture
  Notes in Mathematics, vol. 1341, Springer-Verlag, Berlin, 1988.

\bibitem[Dob05]{Dob05}
M.~Dobrowolski, \emph{{Angewandte Funktionalanalysis: Funktionalanalysis,
  Sobolev-R\"aume und Elliptische Differentialgleichungen}}, {Springer-Verlag},
  {Berlin}, 2005.

\bibitem[EKK{\etalchar{+}}08]{Exnetal08}
P.~Exner, J.~P. Keating, P.~Kuchment, T.~Sunada, and A.~Teplyaev (eds.),
  \emph{Analysis on graphs and its applications}, Proceedings of Symposia in
  Pure Mathematics, vol.~77, American Mathematical Society, Providence, RI,
  2008.

\bibitem[Gri85]{Gri85}
P.~Grisvard, \emph{Elliptic problems in nonsmooth domains}, Monographs and
  Studies in Mathematics, vol.~24, Pitman (Advanced Publishing Program),
  Boston, MA, 1985.

\bibitem[GS06]{GnuSmi07}
S.~Gnutzman and U.~Smilansky, \emph{{Quantum graphs: application to quantum
  chaos and universal spectral statistics}}, Adv. in Phys. \textbf{55} (2006),
  527 -- 625.

\bibitem[Har07]{Har07a}
M.~Harmer, \emph{Two particles on a star graph. {I}}, Russ. J. Math. Phys.
  \textbf{14} (2007), 435--439.

\bibitem[Har08]{Har08}
\bysame, \emph{Two particles on a star graph. {II}}, Russ. J. Math. Phys.
  \textbf{15} (2008), 473--480.

\bibitem[Kat66]{Kat66}
T.~Kato, \emph{Perturbation theory for linear operators}, Die Grundlehren der
  mathematischen Wissenschaften, Band 132, Springer-Verlag New York, Inc., New
  York, 1966.

\bibitem[KS99a]{KosSch99}
V.~Kostrykin and R.~Schrader, \emph{Kirchhoff's rule for quantum wires}, J.
  Phys. A: Math. Gen. \textbf{32} (1999), 595--630.

\bibitem[KS99b]{KotSmi99}
T.~Kottos and U.~Smilansky, \emph{Periodic orbit theory and spectral statistics
  for quantum graphs}, Ann. Phys. (NY) \textbf{274} (1999), 76--124.

\bibitem[Kuc04]{Kuc04}
P.~Kuchment, \emph{Quantum graphs. {I}. {S}ome basic structures}, Waves Random
  Media \textbf{14} (2004), S107--S128.

\bibitem[LL63]{LieLin63}
E.~H. Lieb and W.~Liniger, \emph{Exact analysis of an interacting {B}ose gas.
  {I}. {T}he general solution and the ground state}, Phys. Rev. (2)
  \textbf{130} (1963), 1605--1616.

\bibitem[LSSY05]{Lieetal05}
E.~H. Lieb, R.~Seiringer, J.~P. Solovej, and J.~Yngvason, \emph{The mathematics
  of the {B}ose gas and its condensation}, Oberwolfach Seminars, vol.~34,
  Birkh\"auser Verlag, Basel, 2005.

\bibitem[RS78]{ReeSim78}
M.~Reed and B.~Simon, \emph{Methods of modern mathematical physics. {IV}.
  {A}nalysis of operators}, Academic Press, New York, 1978.

\bibitem[Yan67]{Ya67}
C.~N. Yang, \emph{Some exact results for the many-body problem in one dimension
  with repulsive delta-function interaction}, Phys. Rev. Lett. \textbf{19}
  (1967), 1312--1315.

\end{thebibliography}

\end{document}